\theoremstyle{plain}
\newtheorem{theorem}{Theorem}
\newtheorem{lemma}{Lemma}
\theoremstyle{remark}
\newtheorem{remark}{Remark}
\theoremstyle{definition}
\newtheorem{definition}{Definition}
\DeclareMathOperator{\rk}{rk}
\definecolor{applegreen}{rgb}{0.55, 0.71, 0.0}
\title{A Matrix Completion Approach for the Construction of MDP Convolutional Codes}
\author{\large
Sakshi Dang\thanks{Sakshi Dang is with the Department of Mathematics, Indian Institute of Technology Bombay, Mumbai, India, (email: sakshidang10@gmail.com). She is supported by a research fellowship from the Council of Scientific and Industrial Research (CSIR), Government of India.}, Julia Lieb,\thanks{Julia Lieb is with the Department of Mathematics, TU Ilmenau, Ilmenau, Germany, ({email: julia.lieb@tu-ilmenau.de}). She is supported by the German research foundation, project number 513811367.} Okko Makkonen\thanks{Okko Makkonen is with the Department of Mathematics and Systems Analysis, Aalto University, Espoo, Finland, {(email: okko.makkonen@aalto.fi). He is supported by the Vilho, Yrjö and Kalle Väisälä Foundation of the Finnish Academy of Science and Letters.}}, Pedro Soto\thanks{Pedro Soto is with the Department of Mathematics, Virginia Tech, Blacksburg, VA, USA (email: pedrosoto@vt.edu). He is supported by the Amazon-Virginia Tech Initiative for Efficient and Robust Machine Learning.}, and Alex Sprintson\thanks{Alex Sprintson is with the Department of Electrical and Computer Engineering, Texas A\&M University,  TX, USA (email:  spalex@tamu.edu). His work is supported by the US NSF under Grant CIF 2327510.} \thanks{This work was made possible due to the support of  the Steger Center for International Scholarship - Virginia Tech in Riva San Vitale, Switzerland.}
}
\date{\today}
\begin{document}

\maketitle

\begin{abstract}

Maximum Distance Profile (MDP) convolutional codes are an important class of channel codes due to their maximal delay-constrained error correction capabilities. The design of MDP codes has attracted significant attention from the research community. However, only limited attention was given to addressing the complexity of encoding and decoding operations. 
This paper aims
to reduce encoding complexity by constructing partial unit-memory MDP codes with structured and sparse generator matrices. In particular, we present a matrix completion framework that extends a structured superregular matrix (\emph{e.g.,} Cauchy) over a small field to a sparse sliding generator matrix of an MDP code. We show that the proposed construction can reduce the encoding complexity compared to the current state-of-the-art MDP code designs.

\end{abstract}

\section{Introduction}

Error-correcting codes are used to protect data against transmission errors in communication systems. In contrast to the classical block codes, where data blocks of fixed length are encoded and decoded independently, convolutional codes possess memory, which creates dependencies among consecutive blocks of encoded symbols. This is achieved via using a polynomial generator matrix $G(z)\in\mathbb F_q[z]^{k\times n}$ over 
some finite field $\mathbb F_{q}$ and defining the convolutional code to be the module $\{m(z)G(z)\ |\ m(z)\in\mathbb F_{q}[z]^k\}$. 
The memory of such a convolutional code is then equal to the degree of the matrix polynomial $G(z)=\sum_{t\in\mathbb N_0}G_tz^t$ with coefficient matrices $G_t\in\mathbb F_{q}^{k\times n}$ for $t\in\mathbb N_0$. As the computational complexity increases with the memory, so-called (partial) unit-memory convolutional codes, \emph{i.e.},
codes with generator matrices of the form $G_0+G_1z$, are of particular interest \cite{pum, wachter2012efficient}. Another parameter
of a convolutional code is its degree $\delta$, defined as the maximal degree of all full-size minors of $G(z)$.

The error-correcting capability of a  code can be measured by its  minimum distance, where, depending on the application, different distance notions can be considered. For sequential encoding and decoding of convolutional codes, the so-called $j$-th \textbf{column distances} are the most relevant measure. In particular, for low-delay applications, it is important to maximize the \textbf{first column distances} 
 of the code (see Section~\ref{sec:pre} for definitions). There exist upper bounds for these column distances depending on the parameters of the code \cite{MDS-Conv}. The convolutional codes that reach these upper bounds  for the largest possible value of $j$
 are called \textbf{Maximum Distance Profile (MDP)}. 
A partial unit-memory convolutional code is MDP if and only if all full-size minors of the matrix
\begin{equation*}
    G_1^c \coloneqq \begin{pmatrix}
        G_0 & G_1 \\ \mathbf{0} & G_0
    \end{pmatrix} \in \mathbb{F}_q^{2k \times 2n}
\end{equation*}
that can be nonzero are indeed nonzero. Note that this implies that it is a necessary condition for the convolutional code to be MDP that $G_0$ is an MDS matrix.
In \cite{hutchinson2005convolutional} it has been shown that MDP convolutional codes exist for all code parameters if the size of the underlying finite field is sufficiently large. However, it turns out to be a challenging task to provide algebraic constructions of MDP convolutional codes over small finite fields~\cite{alfarano2024weighted,chen2023lower,almeida2013new,MDS-Conv,luo2023construction,tomas2012decoding}.

This paper aims to reduce encoding complexity by constructing
partial unit-memory MDP codes with structured and sparse
generator matrices. 
 To this end it is desirable to have MDP codes over fields of the smallest possible size. At the same time the complexity can be reduced if many of the entries of the coefficient matrices are lying in a subfield of $\mathbb F_{q}$ or are even equal to zero. A third way to optimize complexity is to use structured coefficient matrices.

Our main contribution is the construction of partial unit-memory MDP codes with $k>n-k=\delta$ where $G_0$ is a given structured MDS matrix over a field of smallest possible size $q$. In particular, we choose $G_0$ to be a  Cauchy matrix. 
We then identify a matrix $G_1$ with a maximal amount of zero entries that completes $G_1^c$ such that $G(z)$ is the generator matrix of an MDP convolutional code, at the same time choosing the nonzero entries in $G_1$ from $\mathbb F_{q^d}$ with $d\in\mathbb N$ as small as possible. 
With this procedure, we can view the construction of MDP codes as a matrix completion problem~\cite{8786158,8555105}.

The paper is structured as follows.
In Section \ref{sec:pre}, we present the necessary background on convolutional codes. In Section~\ref{sec:ex}, we give a theoretical upper bound on the field size for the existence of partial-unit memory MDP codes with $k>n-k=\delta$. In Section \ref{sec:construct}, we present our main result, which is a construction of partial unit-memory MDP codes with the above properties. We show that it is sufficient to choose $d=\lceil\frac{\delta^2-1}{4}\rceil +1$ but conjecture that our construction is MDP also for smaller values of $d$. In Section \ref{sec:comp}, we analyze the encoding complexity of our construction and show that it can reduce the encoding complexity compared to the current state-of-the-art MDP code designs.

\section{Preliminaries}\label{sec:pre}

Throughout this paper we denote by $\mathbb F_q$ the finite field with $q$ elements and by $\mathbb{F}_{q}[z]$ the ring of polynomials with coefficients in $\mathbb{F}_{q}$. We use $[n]$ to denote the set $\{1, 2, \dots, n\}$ and $i + S$ = $\{i + s \mid s \in S\}$ for a set $S$.

A linear block code of rate $k / n$ is a $k$-dimensional subspace $\mathcal{C}$ of $\mathbb{F}_q^n$. A matrix $G \in \mathbb{F}_q^{k \times n}$ is a generator matrix of $\mathcal{C}$ if the rows of $G$ span $\mathcal{C}$. The space $\mathbb{F}_q^n$ is equipped with the metric induced by the Hamming weight $wt(v)$, which is the number of nonzero coordinates in $v \in \mathbb{F}_q^n$. The minimum distance $d(\mathcal{C})$ of a linear code is the minimal 
Hamming weight of any nonzero codeword. The famous Singleton bound states that $d(\mathcal{C}) \leq n - k + 1$ and codes that achieve this bound with equality 
are called maximum distance separable (MDS) codes. We say that a matrix $G \in \mathbb{F}_q^{k \times n}$ has the MDS property if it generates an MDS code. It is well known that $G$ has the MDS property if and only if any $k \times k$ minor of $G$ is nonzero.  
We call a matrix whose minors are all non-zero a \textbf{superregular matrix}.
An example of a family of superregular matrices is Cauchy matrices. A $k \times n$ \textbf{Cauchy matrix} is a matrix $C$ whose entries are given by $C_{ij} = \frac{1}{\alpha_j - \beta_i}$ where $\alpha_1, \dots, \alpha_n, \beta_1, \dots, \beta_k \in \mathbb{F}_q$ are distinct elements.

\begin{definition}
    A \textbf{convolutional code} $\mathcal{C}$ of rate $k/n$ over $\mathbb{F}_q$ is an $\mathbb{F}_{q}[z]$-submodule of $\mathbb{F}_{q}[z]^n$ of rank $k$. A polynomial matrix $G(z)\in\mathbb{F}_{q}[z]^{k\times n}$ of full row rank such that
    \begin{eqnarray}
\mathcal{C} =  \{v(z) = u(z)G(z) \in \mathbb{F}_{q}[z]^n\ |\ u(z) \in \mathbb{F}_{q}[z]^k\} \nonumber
\end{eqnarray}
is called \textbf{generator matrix} of $\mathcal{C}$.
The maximum degree of the full-size minors of a generator matrix of $\mathcal{C}$ is called the \textbf{degree} $\delta$ of $\mathcal{C}$. A convolutional code of rate $k/n$ and degree $\delta$ is also denoted as $(n, k, \delta)$ convolutional code.
\end{definition}

For convolutional codes (in the Hamming metric), there are two main distance measures: the free distance, which measures the full error-correcting capability of the code and, for any $j\in\mathbb N_0$, the $j$-th column distance, which measures how many errors can be corrected with time delay $j$.

\begin{definition}
The \textbf{free distance} of a convolutional code $\mathcal{C}$ is
$$d_{free}(\mathcal{C}):= \min \{wt(v(z))\ |\ v(z) \in \mathcal{C}\setminus\{0\}\},$$ where $wt(v(z))\hspace{-1mm} = \hspace{-1mm}\sum^{\deg(v(z))}_{t=0} wt (v_t)$
for $v(z)\hspace{-1mm} =\hspace{-1mm} \sum_{t=0}^{\deg (v(z))}v_{t}z^{t}$.
For $j\in\mathbb N_0$, the \textbf{$j$-th column distance}\index{column distance} of $\mathcal{C}$ is defined as
\[
d_j^c(\mathcal{C}):=\min\left\{\sum_{t=0}^j wt(v_t)\ |\ {v}(z)\in\mathcal{C} \text{ and }{v}_0 \neq 0\right\}.
\]
\end{definition}

There is a Singleton-like upper bound for the free distance of a convolutional code called a \emph{generalized Singleton bound}.

\begin{theorem}\cite{rosenthal1999maximum}\label{thm:singleton}
    Let $\mathcal{C}$ be an $(n,k,\delta)$ convolutional code. Then,
    $d_{free}(\mathcal{C}) \leq (n-k)\Big( \Big\lfloor \frac{\delta}{k} \Big\rfloor +1 \Big) + \delta +1.$
\end{theorem}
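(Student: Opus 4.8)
The plan is to produce a low-weight codeword by intersecting $\mathcal{C}$ with the block code consisting of the codewords of bounded degree, and then to invoke the ordinary Singleton bound for block codes. Set $\nu := \lfloor \delta/k\rfloor$ and define
\[
\mathcal{C}_\nu := \{\, v(z)\in\mathcal{C} \ :\ \deg v(z) \le \nu \,\}.
\]
Every nonzero element of $\mathcal{C}_\nu$ is a nonzero codeword of $\mathcal{C}$, and for $v(z)=\sum_{t=0}^{\nu}v_t z^t$ the convolutional weight $wt(v(z))=\sum_{t=0}^{\nu}wt(v_t)$ equals the Hamming weight of the coefficient vector $(v_0,\dots,v_\nu)\in\mathbb{F}_q^{(\nu+1)n}$. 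Identifying $\mathcal{C}_\nu$ with the set of these coefficient vectors makes it an $\mathbb{F}_q$-linear block code of length $(\nu+1)n$, and $d_{free}(\mathcal{C}) \le d(\mathcal{C}_\nu)$ provided $\mathcal{C}_\nu \neq \{0\}$.

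The next step is the dimension count for $\mathcal{C}_\nu$. Fix a minimal generator matrix $G(z)$ of $\mathcal{C}$; it is row-reduced with row degrees $\nu_1,\dots,\nu_k$ (the Forney indices of $\mathcal{C}$), and these satisfy $\sum_{i=1}^{k}\nu_i=\delta$. By the predictable degree property of row-reduced matrices, $\deg\big(u(z)G(z)\big)=\max_i\big(\deg u_i(z)+\nu_i\big)$ for every $u(z)=(u_1(z),\dots,u_k(z))\in\mathbb{F}_q[z]^k$; hence $u(z)G(z)\in\mathcal{C}_\nu$ if and only if $\deg u_i(z)\le \nu-\nu_i$ for all $i$, which in particular forces $u_i(z)=0$ whenever $\nu_i>\nu$. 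Since $G(z)$ has full row rank the assignment $u(z)\mapsto u(z)G(z)$ is injective, so
\[
\dim_{\mathbb{F}_q}\mathcal{C}_\nu \ =\ \sum_{i:\,\nu_i\le\nu}(\nu+1-\nu_i)\ \ge\ \sum_{i=1}^{k}(\nu+1-\nu_i)\ =\ k(\nu+1)-\delta\ \ge\ 1,
\]
where the first inequality holds because the discarded terms (those with $\nu_i>\nu$) are non-positive, and the last inequality follows from $\delta-k\nu\le k-1$. In particular $\mathcal{C}_\nu\neq\{0\}$.

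Finally, applying the Singleton bound to the block code $\mathcal{C}_\nu\subseteq\mathbb{F}_q^{(\nu+1)n}$ yields
\[
d_{free}(\mathcal{C})\ \le\ d(\mathcal{C}_\nu)\ \le\ (\nu+1)n-\dim_{\mathbb{F}_q}\mathcal{C}_\nu+1\ \le\ (\nu+1)n-\big(k(\nu+1)-\delta\big)+1\ =\ (n-k)\Big(\Big\lfloor\tfrac{\delta}{k}\Big\rfloor+1\Big)+\delta+1,
\]
which is the asserted bound. The main obstacle is the dimension count for $\mathcal{C}_\nu$: it relies on having a generator matrix whose row degrees sum to $\delta$ and on the predictable degree identity $\deg(u(z)G(z))=\max_i(\deg u_i(z)+\nu_i)$, so the only delicate point is invoking the existence of a minimal (row-reduced) generator matrix and these structural facts about it; once they are available, the argument reduces to the classical Singleton bound for block codes. (Note the estimates need not be tight simultaneously—if the Forney indices are unbalanced, $\dim\mathcal{C}_\nu$ exceeds $k(\nu+1)-\delta$ and one even obtains a sharper bound—but this only strengthens the conclusion.)
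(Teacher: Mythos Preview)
The paper does not supply its own proof of this statement; it is quoted from \cite{rosenthal1999maximum} as background. Your argument is correct and is in fact the standard proof from that reference: truncate $\mathcal{C}$ to codewords of degree at most $\nu=\lfloor\delta/k\rfloor$, lower-bound the $\mathbb{F}_q$-dimension of the resulting block code via the predictable degree property of a row-reduced (minimal) generator matrix, and apply the ordinary Singleton bound. The only external inputs you rely on---existence of a row-reduced generator matrix with row degrees summing to $\delta$ and the predictable degree identity---are exactly the structural facts Rosenthal and Smarandache invoke, so there is nothing to compare against within the present paper.
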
 

For the $j$-th column distance, one has the following upper bound.

\begin{theorem}\cite{MDS-Conv}\label{thm:column} Let $\mathcal{C}$ be an $(n,k,\delta)$ convolutional code. Then, for $j\in\mathbb N_0$,
$$d_j^c (\mathcal{C}) \leq (n-k)(j + 1) + 1.$$
Moreover, if $d^c_j = (n-k)(j+1)+1$ for some $j \in \mathbb N_0$, then $d^c_i = (n-k)(i+1)+1$, for $i \leq j$. 

\end{theorem}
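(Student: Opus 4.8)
The "final statement" in the excerpt is Theorem (from \cite{MDS-Conv}) on the column distance bound $d_j^c(\mathcal{C}) \le (n-k)(j+1)+1$, together with the monotonicity-type consequence. I will sketch a proof of this.

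\medskip

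The plan is to pass from the polynomial description of the code to the sliding generator matrix description. Writing $G(z) = \sum_{t=0}^{m} G_t z^t$ with $G_t \in \mathbb{F}_q^{k\times n}$, one forms for each $j\in\mathbb N_0$ the truncated block-Toeplitz matrix
\begin{equation*}
G_j^c \coloneqq \begin{pmatrix}
G_0 & G_1 & \cdots & G_j \\
    & G_0 & \cdots & G_{j-1} \\
    &     & \ddots & \vdots \\
    &     &        & G_0
\end{pmatrix} \in \mathbb{F}_q^{(j+1)k \times (j+1)n},
\end{equation*}
with the convention $G_t = \mathbf 0$ for $t > m$. The key observation is that for a codeword $v(z) = u(z)G(z)$ with coefficient vectors $u_0,\dots,u_j$ and $v_0,\dots,v_j$, one has $(v_0,\dots,v_j) = (u_0,\dots,u_j)\,G_j^c$. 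Moreover, the condition $v_0 \neq 0$ together with $G(z)$ having full row rank forces $u_0 \neq 0$, so the minimal value of $\sum_{t=0}^j wt(v_t)$ over codewords with $v_0\neq 0$ equals the minimum Hamming weight of a vector in the row space of $G_j^c$ whose first $n$ coordinates do not all vanish, equivalently a nonzero $\mathbb{F}_q$-combination of rows of $G_j^c$ that uses at least one row from the first block row. This reduces the problem to a Singleton-type bound for a block-triangular matrix.

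\medskip

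The core step is then a counting/rank argument on $G_j^c$. Suppose for contradiction that $d_j^c(\mathcal{C}) \ge (n-k)(j+1)+2$. Take the first row of $G_0$ (which we may assume, after column permutation / using that $G(z)$ is a generator matrix, is part of a full-rank block) and try to build a low-weight codeword: the idea is to choose the $(j+1)k$ coefficients in $(u_0,\dots,u_j)$, subject to $u_0\neq 0$, so as to zero out as many of the $(j+1)n$ output coordinates as possible. Zeroing a coordinate imposes one linear constraint; we have $(j+1)k - 1$ free parameters after fixing a nonzero normalization in $u_0$, so generically we can force $(j+1)k-1$ coordinates to be zero, leaving at most $(j+1)n - (j+1)k + 1 = (n-k)(j+1)+1$ nonzero coordinates. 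Making this rigorous requires checking that the relevant submatrix of constraints can indeed be chosen to have full rank — this is where the block-triangular structure (the leading $G_0$ blocks) and the full-row-rank property of $G(z)$ are used, and this is the step I expect to be the main obstacle, since one must rule out degenerate configurations where the constraints become dependent. An alternative and cleaner route is to interpret $G_j^c$ (more precisely, the submatrix consisting of its first $n$ columns in each block together with $\dots$) and invoke the ordinary Singleton bound for the block code generated by $G_j^c$ restricted appropriately; the bound $d_j^c \le (n-k)(j+1)+1$ then falls out of the Singleton bound $d \le N - K + 1$ with $N = (j+1)n$ and an effective dimension $K = (j+1)k$, after arguing that the "$v_0 \neq 0$" restriction does not decrease the attainable minimum below that value.

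\medskip

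For the "moreover" part, I would argue contrapositively using the nested structure of the $G_j^c$: if $d_i^c(\mathcal{C}) < (n-k)(i+1)+1$ for some $i$, exhibit a codeword $v(z)$ with $v_0 \neq 0$ and $\sum_{t=0}^i wt(v_t) \le (n-k)(i+1)$, and then observe that the same codeword (its truncation to degree $j$, or itself if $\deg v \le i$) witnesses $\sum_{t=0}^j wt(v_t) \le \sum_{t=0}^i wt(v_t) + \text{(contribution from } t=i+1,\dots,j)$ — this does not immediately work because the later coefficients can add weight. The correct direction is the stated one: $d_j^c = (n-k)(j+1)+1$ for some $j$ implies $d_i^c = (n-k)(i+1)+1$ for all $i \le j$. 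Here one uses that $d_i^c \le d_j^c - (\text{something})$ is false in general, so instead: from the truncation map sending a weight-minimizing degree-$j$ configuration to its degree-$i$ prefix, any codeword with $v_0\neq 0$ achieving $\sum_{t=0}^i wt(v_t) \le (n-k)(i+1)$ could be extended (using surjectivity of the relevant projection of $G_j^c$ onto $G_i^c$, which holds because $G_i^c$ is itself a leading principal block of $G_j^c$) to a degree-$j$ codeword with at most $(n-k)(i+1) + (n-k)(j-i) = (n-k)(j+1) < (n-k)(j+1)+1$ nonzero coordinates among the first $i+1$ blocks plus... — so I would instead simply cite that $G_i^c$ being the top-left block of $G_j^c$ gives $d_i^c(\mathcal C)$ in terms of the same kind of minimum, and a shortfall at level $i$ propagates to a shortfall at level $j$ by appending zero coefficients $u_{i+1} = \dots = u_j = 0$, which keeps $v_0 \neq 0$ and only possibly decreases the later $v_t$. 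Thus $d_j^c < (n-k)(j+1)+1$, contradiction; hence $d_i^c = (n-k)(i+1)+1$ for all $i \le j$.
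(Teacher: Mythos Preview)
The paper does not prove this theorem at all; it is stated as a citation from \cite{MDS-Conv} and used as background. So there is no paper proof to compare against, and your proposal must stand on its own.

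For the inequality $d_j^c(\mathcal C)\le (n-k)(j+1)+1$, your counting argument is the right idea. The clean way to finish the step you flag as ``the main obstacle'' is: fix one coordinate of $u_0$ to $1$ (so $u_0\neq 0$), and observe that the remaining $(j+1)k-1$ rows of $G_j^c$ still have rank $(j+1)k-1$ because the diagonal $G_0$-blocks have full row rank. Hence the image is an affine subspace of $\mathbb F_q^{(j+1)n}$ of dimension $(j+1)k-1$, and any such coset contains a vector of weight at most $(j+1)n-((j+1)k-1)=(n-k)(j+1)+1$. This closes the gap you were worried about without any genericity assumption.

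For the ``moreover'' part there is a genuine error. Your final argument---append $u_{i+1}=\cdots=u_j=0$ and claim this ``only possibly decreases the later $v_t$''---is false: with $u_{i+1}=\cdots=u_j=0$ one gets $v_t=\sum_{s=0}^{i}u_sG_{t-s}$ for $t>i$, which is typically nonzero, so the total weight can jump well above $(n-k)(j+1)$. The argument you started and then abandoned is the correct one: having found $(u_0,\dots,u_i)$ with $u_0\neq 0$ and $\sum_{t\le i}wt(v_t)\le (n-k)(i+1)$, choose $(u_{i+1},\dots,u_j)$ to minimise $\sum_{t=i+1}^{j}wt(v_t)$. That tail, as a function of $(u_{i+1},\dots,u_j)$, ranges over a coset of a linear code of dimension $(j-i)k$ inside $\mathbb F_q^{(j-i)n}$ (again by the block-triangular structure with invertible $G_0$'s on the diagonal), so some choice gives tail weight $\le (j-i)(n-k)$. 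Adding, $\sum_{t\le j}wt(v_t)\le (n-k)(j+1)$, contradicting $d_j^c=(n-k)(j+1)+1$. Replace the zero-append step with this and the proof goes through.
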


It is easy to see that $d_0^c(\mathcal{C})\leq d_1^c(\mathcal{C})\leq \cdots \leq d_{free}(\mathcal{C})$ and hence, it follows from the two previous theorems
that $d_j^c (\mathcal{C}) = (n-k)(j + 1) + 1$ implies  $j\leq L$ with

\begin{equation*}
L:=\left\lfloor\frac{\delta}{k}\right\rfloor+\left\lfloor\frac{\delta}{n-k}\right\rfloor.
\end{equation*}

\begin{definition}\label{def:MDP_code}
An $(n,k,\delta)$ convolutional code $\mathcal{C}$ is called \textbf{maximum distance profile (MDP)} if $d_L^c (\mathcal{C}) = (n-k)(L + 1) + 1$.
\end{definition}

It has been shown that MDP convolutional codes exist for all $(n,k,\delta)$ and sufficiently large finite fields \cite{hutchinson2005convolutional}. Besides, several constructions have been provided, see \cite{alfarano2024weighted,chen2023lower,almeida2013new,MDS-Conv,luo2023construction,tomas2012decoding}. However, it turns out to be a difficult task to construct MDP codes over small finite fields. This can be explained with the following theorem, which provides a criterion to check whether a convolutional code is MDP. For this theorem, we need the \textbf{sliding generator matrix} 
defined by $$G_j^c:=\begin{pmatrix}
G_0 & \cdots & G_{j}\\
 & \ddots & \vdots\\
\mathbf{0} &  & G_0
\end{pmatrix}\in\mathbb F_q^{(j+1)k \times (j+1)n}\quad
\text{for}\quad j\in\mathbb N_0.
$$ 

\begin{theorem}\cite{MDS-Conv}\label{thm:MDP_criterion}
Let $\mathcal{C}$ be a convolutional code with generator matrix $G(z)=\sum_{i=0}^{\mu}G_iz^i\in \mathbb{F}_q[z]^{k\times n}$ and $0\leq j\leq L:=\left\lfloor\frac{\delta}{k}\right\rfloor+\left\lfloor\frac{\delta}{n-k}\right\rfloor$.
The following statements are equivalent:
  \begin{itemize}
  \item[(a)] $d_j^c (\mathcal{C})=(n-k)(j + 1) + 1$.
  \item[(b)]
  Every full-size minor of  $G_j^c\in\mathbb F_q^{(j+1)k \times (j+1)n}$
  formed
from the columns with indices $ 1 \leq t_1 < \cdots < t_{(j+1)k}$, where $t_{sk+1} > sn$, for $s=1,2, \dots, j$, is nonzero. Here we define $G_i=0$ for $i>\mu$.
  \end{itemize}
\end{theorem}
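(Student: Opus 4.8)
The plan is to prove the equivalence by induction on $j$, leveraging two monotonicity facts. First, if (a) holds at level $j$ then by the ``moreover'' part of Theorem~\ref{thm:column} it holds at every level $j'\le j$; in particular $d_0^c(\mathcal{C})=n-k+1$, so $G_0$ has the MDS property. Second, if (b) holds at level $j$ then it holds at level $j-1$: given an admissible index set $T\subseteq[jn]$ of size $jk$ for level $j-1$ (i.e.\ with $t_{sk+1}>sn$ for $s=1,\dots,j-1$), the set $T'=T\cup\{jn+1,\dots,jn+k\}$ is admissible for level $j$, and the square submatrix $G_j^c|_{T'}$ is block lower triangular with diagonal blocks $G_{j-1}^c|_{T}$ and the matrix of the first $k$ columns of $G_0$, so $\det(G_j^c|_{T'})=\det(G_{j-1}^c|_{T})\cdot\det((G_0)_{[k]})$; nonvanishing for all such $T'$ therefore forces $\det(G_{j-1}^c|_{T})\neq 0$ for every $T$. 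Consequently either hypothesis at level $j$ implies $G_0$ has the MDS property (via level $0$), and the base case $j=0$ is immediate since both conditions then say exactly ``$G_0$ has the MDS property''. Throughout I use that $G_0$ has full row rank, which is the standard assumption in this context and holds automatically here.

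The bridge between (a) and (b) is the truncation identity
\[
  d_j^c(\mathcal{C})=\min\bigl\{\,wt\bigl(u_{[0,j]}G_j^c\bigr)\ \bigm|\ u_{[0,j]}=(u_0,\dots,u_j)\in\mathbb F_q^{(j+1)k},\ u_0\neq\mathbf{0}\,\bigr\},
\]
where $u_{[0,j]}G_j^c=(v_0,\dots,v_j)$ collects the first $j+1$ coefficients of $v(z)=u(z)G(z)$. The inequality ``$\ge$'' is immediate from the definition of $d_j^c$, since $v_0=u_0G_0$ forces $u_0\neq\mathbf{0}$ whenever $v_0\neq\mathbf{0}$; the inequality ``$\le$'' uses full row rank of $G_0$, so that conversely $u_0\neq\mathbf{0}$ forces $v_0\neq\mathbf{0}$. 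I also use the staircase shape of $G_j^c$: it yields $u_{[0,s]}G_s^c=(v_0,\dots,v_s)$ for $s\le j$, and when $u_0=\dots=u_{s-1}=\mathbf{0}$ it yields the genuine shift $(v_s,\dots,v_j)=(u_s,\dots,u_j)G_{j-s}^c$.

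For (b)$\Rightarrow$(a): since $d_j^c\le(n-k)(j+1)+1$ by Theorem~\ref{thm:column}, suppose toward a contradiction that $wt(u_{[0,j]}G_j^c)\le(n-k)(j+1)$ for some $u_0\neq\mathbf{0}$, and write $w=(w_0,\dots,w_j)=u_{[0,j]}G_j^c$. By the inductive hypothesis at levels $s-1<j$ (applicable because (b) holds there) together with Theorem~\ref{thm:column}, $wt\bigl((w_0,\dots,w_{s-1})\bigr)\ge d_{s-1}^c=(n-k)s+1$ for $s=1,\dots,j$; hence $w$ has strictly more than $(j+1-s)k$ zero coordinates inside column blocks $s,\dots,j$ for each $s=1,\dots,j$, and at least $(j+1)k$ zero coordinates in total. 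A greedy selection ``from the top block down'' (a Hall-type feasibility argument) then extracts an admissible index set $T=\{t_1<\dots<t_{(j+1)k}\}$, with $t_{sk+1}>sn$, lying entirely in the zero set of $w$; since $u_{[0,j]}\neq\mathbf{0}$ this forces $G_j^c|_{T}$ to be singular, contradicting (b). For (a)$\Rightarrow$(b): by Theorem~\ref{thm:column}, $d_s^c=(n-k)(s+1)+1$ for all $s\le j$. If $G_j^c|_{T}$ were singular for some admissible $T$, pick $\mathbf{0}\neq x=(x_0,\dots,x_j)$ with $x\,(G_j^c|_{T})=\mathbf{0}$ and put $w=xG_j^c$, which vanishes on $T$, so $wt(w)\le(n-k)(j+1)$. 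One first checks $x_0\neq\mathbf{0}$: if $s_0\ge1$ is minimal with $x_{s_0}\neq\mathbf{0}$, the shift identity makes $(w_{s_0},\dots,w_j)=(x_{s_0},\dots,x_j)G_{j-s_0}^c$ have nonzero leading block, hence weight $\ge d_{j-s_0}^c=(n-k)(j-s_0+1)+1$, whereas admissibility ($t_{s_0k+1}>s_0n$) forces it to vanish on at least $(j-s_0+1)k$ of its $(j-s_0+1)n$ coordinates, i.e.\ weight $\le(n-k)(j-s_0+1)$ --- a contradiction. With $x_0\neq\mathbf{0}$, the truncation identity gives $wt(w)\ge d_j^c=(n-k)(j+1)+1$, contradicting $wt(w)\le(n-k)(j+1)$.

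The step I expect to be the main obstacle is the greedy/Hall extraction inside (b)$\Rightarrow$(a): converting the purely numerical information ``$w$ has enough zeros in each tail of blocks'' into an honest index set $1\le t_1<\dots<t_{(j+1)k}$ obeying all the staircase constraints $t_{sk+1}>sn$ simultaneously. This is exactly where the block-Toeplitz shape of $G_j^c$ and the cascade of column-distance lower bounds produced by the induction must be combined carefully; everything else is routine manipulation with the block-triangular structure of $G_j^c$ and the definition of $d_j^c$.
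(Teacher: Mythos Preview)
The paper does not prove this theorem at all: it is quoted from \cite{MDS-Conv} and stated without proof, so there is no ``paper's own proof'' to compare against. Your proposal is essentially the standard argument used in the original reference, and it is correct in outline.

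A couple of small remarks. In the monotonicity step for (b), your block-triangular reduction uses $\det((G_0)_{[k]})\neq 0$; this is fine once you have already reduced to level $0$ and know $G_0$ is MDS, but as written you invoke it \emph{before} reaching level $0$, so phrase the induction so that the MDS property of $G_0$ is established first (e.g.\ observe directly that among the admissible $T$ at level $j$ are those with $|T\cap[n]|=k$, which force every $k\times k$ minor of $G_0$ to be nonzero). In (b)$\Rightarrow$(a), the greedy extraction is indeed the only nontrivial point; since your tail counts give $\sum_{i\ge s} z_i \ge (j+1-s)k+1$ for $s=1,\dots,j$ and $\sum_i z_i\ge (j+1)k$, the standard ``fill from the last block backward, capping at what is still needed'' argument works without any Hall machinery, and you should state it explicitly rather than leave it as a sketch. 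The (a)$\Rightarrow$(b) direction is clean as you have it.
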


\begin{remark}\label{rem:non_triv}
For $i\in\{0,\hdots,j\}$, let $I_{i}$ be the set of indices chosen from columns $[in + 1,(i+1)n]$ of $G_j^c$ to form the minors in part (b) of Theorem~\ref{thm:MDP_criterion}. Then, $|\cup_{i \leq \ell} I_i| \leq (\ell+1) k$
and $|\cup_i I_i| = (j + 1)k$. 
We call minors of this form as  \textbf{non-trivial minors}.
Note that no full-size minor of $G_j^c$ not fulfilling the conditions of the previous theorem can be nonzero due to the zeros below the block diagonal in $G_j^c$. 
Hence, all full-size minors of $G_j^c$ that can be nonzero have to be nonzero, causing that the number of minors to check is increasing very rapidly when increasing the code parameters.
\end{remark}

A convolutional code with $\mu=\deg(G(z))=1$ is called \textbf{unit-memory} if $\rk(G_1)=k$ and \textbf{partial unit-memory} if $\rk(G_1)<k$. 
In the following, we want to restrict to the case of partial unit-memory convolutional codes with $k>n-k=\delta$, i.e. $L=1$ and $\rk(G_1)=n-k$; note that the case of codes with $\mu=1$ and $\delta=\min\{n-k,k\}$  has also been considered in \cite{luo2023construction,chen2023lower}.
We have to find $G_0,G_1\in\mathbb F_q^{k\times n}$ such that all full-size minors of 
\begin{equation}\label{eqn:G_1^c}
    G_1^c=\begin{pmatrix}
    G_0 & G_1\\
    \mathbf{0} & G_0
\end{pmatrix}
\end{equation}
where we choose at most $k$ columns out of the first $n$ columns of $G_1^c$ are nonzero. Note that this implies that $G_0$ has to be an MDS matrix (this can be seen by considering the case that we choose exactly $k$ out of the first $n$ columns from $G_1^c$).

\begin{remark}\label{minrank}
    A generator matrix $G(z)$ of a convolutional code with optimal first column distance fulfills $\rk (G_1)  \geq n-k$. An interesting property of our construction in Section~\ref{sec:construct} is that $G_1$ has minimum possible rank of an MDP code with $L\geq 1$. 
\end{remark}

 \begin{proof}
 Consider a minor of $G_1^c$ where we choose the last $n$ columns of this matrix as well as $2k-n$ columns out of the first $n$ columns, i.e. a minor corresponding to a submatrix of $G_1^c$ of the form $$\begin{pmatrix}
     A & G_1\\
  0 & G_0
\end{pmatrix}$$ where $A\in\mathbb F_q^{k\times (2k-n)}$ is a submatrix of $G_0$. If the corresponding convolutional code has optimal first column distance, then (according to Theorem \ref{thm:MDP_criterion}), this minor has to be nonzero.
 If $\rk (G_1)  < n-k$, then we can achieve via row operations that $$\begin{pmatrix}
    A & G_1\\
     0 & G_0
 \end{pmatrix}$$ is transformed into 
 $$\begin{pmatrix}
   \tilde{A}_1      & 0_{(2k-n+1)\times n}\\
    \tilde{A}_2 & \tilde{G}_1\\
    0 & G_0
 \end{pmatrix}$$ with $\tilde{A}_1\in\mathbb F_q^{(2k-n+1)\times(2k-n)}$, in which the first $2k-n+1$ rows are linearly dependent. This contradicts the fact that its determinant has to be nonzero. Hence, we can conclude that $\rk (G_1)  \geq n-k$.
 \end{proof}

\begin{remark}\label{decoding_window}
For the parameters in consideration, i.e. $k>n-k=\delta$ the generalized Singleton bound of Theorem \ref{thm:singleton} is equal to $2(n-k)+1$, which equals the upper bound for the first column distance $d_1^c$ in Theorem \ref{thm:column}. Hence, an MDP code with these parameters fulfills $d_1^c=d_L^c=d_{free}$. 
More general, each MDP convolutional code with $(n-k)\mid\delta$ is an MDS convolutional code.
\end{remark}

\section{Existence of unit-memory MDP codes over sufficiently large fields}
\label{sec:ex}

As mentioned above, we wish to construct partial unit-memory MDP convolutional codes generated by $G(z) = G_0 + G_1z \in \mathbb{F}_q[z]^{k \times n}$. According to Theorem~\ref{thm:MDP_criterion}, it is sufficient that any $2k \times 2k$ minor of $G_1^c$ as in equation \eqref{eqn:G_1^c}
obtained by choosing at most $k$ columns from the first $n$ columns is nonzero. Let $I \cup J$ denote the columns of $G_1^c$ that we chose in the minor, where $I \subseteq \{1, \dots, n\}$ and $J \subseteq \{n + 1, \dots, 2n\}$. We must have that $0 \leq |I| \leq k \leq |J| \leq n$ and $|I \cup J| = 2k$.
Let us fix $G_0 \in \mathbb{F}_q^{k \times n}$ to be an arbitrary MDS matrix. Furthermore, let $G_1 = (X~\mathbf{0}_{k \times k})$, where $X$ is a $k \times (n - k)$ matrix whose entries we consider as indeterminates $x_{1, 1}, \dots, x_{k, n - k}$. The minor of a matrix is a polynomial in the entries of the matrix, so let us define the \textbf{minor product polynomial} of $G_0$ as
\begin{equation*}
    P(X) = \prod_{\substack{0 \leq i \leq k \\ k \leq j \leq n \\ i + j = 2k}} \prod_{\substack{I \subseteq \{1, \dots, n\} \\ J \subseteq \{n + 1, \dots, 2n\} \\ |I| = i, |J| = j}} P_{I, J}(X),
\end{equation*}
where $P_{I, J}(X) = \det((G_1^c)_{I \cup J})$. Here, $(G_1^c)_S$ denotes the submatrix consisting of the columns of $G_1^c$ indexed by $S \subseteq \{1, \dots, 2n\}$. The matrix $G(z) = G_0 + G_1z$, where $G_1 = (X~\mathbf{0}_{k \times k})$, generates an MDP code if and only if $P(X) \neq 0$. We shall show that if the field size is large enough, then the polynomial $P(X)$ cannot be zero at all points.

\begin{lemma}\label{lem:nonzero}
The polynomial $P(X) \in \mathbb{F}_q[x_{1, 1}, \dots, x_{k, n - k}]$ is not identically zero.
\end{lemma}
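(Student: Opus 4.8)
The plan is to show that $P(X)$ is not identically zero by exhibiting a single specialization of the indeterminates at which it does not vanish --- in fact, a specialization over a (possibly larger) field suffices, since the polynomial having a nonzero value over any extension field already implies it is not the zero polynomial in $\mathbb{F}_q[x_{1,1},\dots,x_{k,n-k}]$. Concretely, I would argue that it is enough to find some field $\mathbb{F}_Q \supseteq \mathbb{F}_q$ and some matrix $G_1 \in \mathbb{F}_Q^{k\times n}$ of the form $(X~\mathbf{0}_{k\times k})$ such that all the non-trivial $2k\times 2k$ minors of $G_1^c$ are nonzero; equivalently, it suffices to show that there exists \emph{at least one} partial unit-memory MDP code with these parameters and with the given MDS matrix $G_0$ as its constant coefficient, over a sufficiently large field.

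The cleanest way to produce such a witness is to invoke the existence of MDP convolutional codes over large fields from \cite{hutchinson2005convolutional}. However, there is a subtlety: that result gives \emph{some} MDP code, not necessarily one whose $G_0$ equals our prescribed Cauchy/MDS matrix and whose $G_1$ has the zero block in the last $k$ columns. So instead I would argue directly with a generic/random choice. Treat the $k(n-k)$ entries of $X$ as independent indeterminates over $\mathbb{F}_q$, form the big product polynomial $P(X)$, and observe that $P$ is not the zero polynomial provided each individual factor $P_{I,J}(X)$ is not the zero polynomial (a product of polynomials over an integral domain is zero only if one factor is). So the lemma reduces to the claim: for every admissible pair $(I,J)$, the minor $P_{I,J}(X) = \det\big((G_1^c)_{I\cup J}\big)$ is a nonzero polynomial in the $x_{i,j}$.

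To establish that each $P_{I,J}$ is a nonzero polynomial, I would expand the determinant along the columns coming from $J$ and exploit the block-triangular shape $\begin{pmatrix} A & X' \\ \mathbf{0} & B\end{pmatrix}$, where $A$ consists of the $|I|$ chosen columns of $G_0$ sitting in the top-left block, $B$ consists of the columns of $G_0$ indexed by $J$ (shifted) appearing in the bottom block and partly the top block, and $X'$ is the relevant submatrix of $X$ (together with some zero columns from the $\mathbf{0}_{k\times k}$ part). Using the Cauchy--Binet / Laplace expansion along the first $k$ rows, $P_{I,J}(X)$ becomes a sum over ways of choosing which columns of the top $k$ rows are ``used'' by the top $k\times k$ block and which by the bottom; because $G_0$ is MDS, the relevant minors of $G_0$ appearing as coefficients are all nonzero, and one can pick a monomial in the $x_{i,j}$ whose coefficient is a single nonzero product of $G_0$-minors and hence does not cancel. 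Making this monomial-selection precise --- i.e.\ showing that there is a term in the Laplace expansion that is not cancelled by any other term --- is the technical heart and the step I expect to be the main obstacle; the key point exploited there is that $|I|\le k$, so the $X$-block genuinely has enough columns to ``fill out'' the rank deficiency of the chosen columns of $G_0$ in the top block, and the MDS property of $G_0$ guarantees every needed sub-minor is nonzero.

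Once each $P_{I,J}$ is shown to be a nonzero polynomial, the product $P(X)$ is nonzero as an element of the integral domain $\mathbb{F}_q[x_{1,1},\dots,x_{k,n-k}]$, which is exactly the statement of Lemma~\ref{lem:nonzero}. It is worth noting what is \emph{not} claimed here: we do not yet claim $P$ has a nonzero evaluation over $\mathbb{F}_q$ itself --- that requires a field-size bound (Schwartz--Zippel applied to $\deg P$), which is presumably the content of the field-size estimate in Section~\ref{sec:ex}. For the lemma as stated, only the non-vanishing as a formal polynomial is needed, so I would keep the argument purely combinatorial/algebraic and avoid any counting of the degree at this stage.
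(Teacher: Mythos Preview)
Your reduction is exactly the paper's: since $\mathbb{F}_q[x_{1,1},\dots,x_{k,n-k}]$ is a domain, it suffices to show each factor $P_{I,J}(X)=\det\big((G_1^c)_{I\cup J}\big)$ is a nonzero polynomial. Where you diverge is in the execution of that step, and you leave it genuinely incomplete --- you describe a Laplace/monomial-extraction strategy, then explicitly label the non-cancellation argument as ``the main obstacle'' without carrying it out. That is the gap.

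The paper closes it more cheaply than you anticipate, by specialization rather than monomial bookkeeping. Write $M(X)=(G_1^c)_{I\cup J}$ with $|I|=i$, $|J|=j$, $i+j=2k$. After harmlessly permuting the columns of $J$ so that those landing in the $X$-block of $G_1$ come first, $M(X)$ has the shape
\[
M(X)=\begin{pmatrix} (G_0)_I & X' & \mathbf{0}\\ \mathbf{0} & \multicolumn{2}{c}{(G_0)_{J-n}} \end{pmatrix},
\]
where $X'$ has at least $j-k=k-i$ columns. Since $G_0$ is MDS, $(G_0)_I$ has rank $i$, so some set of $i$ rows of $(G_0)_I$ is independent; without loss of generality these are the bottom $i$ rows. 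Now set the first $k-i$ columns of $X'$ to the standard basis vectors $e_1,\dots,e_{k-i}\in\mathbb{F}_q^k$ and all remaining entries of $X'$ to zero. For this choice the top-right $k\times k$ block of $M(X)$ vanishes, so $\det M(X)$ factors as the product of the top-left and bottom-right $k\times k$ blocks. The bottom-right block is a $k\times k$ submatrix of $G_0$, hence invertible by the MDS property; the top-left block is $[(G_0)_I\mid e_1\mid\cdots\mid e_{k-i}]$, whose determinant is (up to sign) the $i\times i$ minor of $(G_0)_I$ on the bottom $i$ rows, nonzero by the choice above. Hence $P_{I,J}$ does not vanish at this point and is not the zero polynomial.

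Your Laplace route would also succeed: the coefficient of the monomial $\prod_{s=1}^{k-i} x_{r_s,c_s}$ (with the $r_s$ chosen so that the complementary $i$ rows of $(G_0)_I$ are independent) is, up to sign, exactly the same product of an $i\times i$ minor of $(G_0)_I$ and a $k\times k$ minor of $G_0$, and no other permutation in the Leibniz sum produces this monomial. So the ``obstacle'' you flagged is not one; but as written your proposal stops short of supplying either argument.
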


\begin{proof}
Let $0 \leq i \leq k$, $k \leq j \leq n$, $i + j = 2k$ and $I \subseteq \{1, \dots, n\}$, $|I| = i$, $J \subseteq \{n + 1, \dots, 2n\}$, $|J| = j$. Denote $M(X) = (G_1^c)_{I \cup J}$. We want to show that the polynomial $P_{I, J}(X) = \det(M(X))$ is not identically zero. We will do this by showing that for some assignment of $X$, the matrix $M(X)$ is invertible.
    
As any $k$ columns of $G_0$ are linearly independent, the $k \times i$ matrix $(G_0)_I$ has rank $i$ so some $i \times i$ submatrix of $(G_0)_I$ is invertible. Without loss of generality, assume that it consists of the bottom $i$ rows of $(G_0)_I$. Denote by $X'$ the matrix of variables that we have chosen with this choice of $J$, e.g., $X'$ is a submatrix of $X$.
\begin{equation*}
    M(X) = 
    \begin{pNiceArray}{ccc|cc|ccc}
        \Block{3-3}{(G_0)_I} & & & \Block{3-2}{X'} & & \Block{3-3}{\mathbf{0}} & & \\
        & & & & & & & \\
        & & & & & & & \\ 
        \hline
        \Block{3-3}{\mathbf{0}} & & & \Block{3-5}{(G_0)_J} & & & \\
        & & & & & & & \\
        & & & & & & &
    \end{pNiceArray}
\end{equation*}
The matrix $X'$ consists of $\geq j - k$ columns and $k$ rows. Set $x'_{1, 1} = x'_{2, 2} = \dots = x'_{j - k, j - k} = 1$ and the other entries to zero. Then, the top right $k \times k$ block of $M(X)$ is zero, so the determinant of $M(X)$ is the product of the determinants of the top left $k \times k$ block and the bottom right $k \times k$ block. The latter block is invertible, since it is a $k \times k$ submatrix of $G_0$. The former is also invertible by construction. This means that $M(X)$ is invertible for this choice of $X$, so $P_{I,J}(X) = \det(M(X)) \neq 0$.
\end{proof}

Let us next compute the degree of the polynomial $P(X)$. For $n \geq 1$ and $0 \leq k \leq n$ define
\begin{equation*}
    d(n, k) = \sum_{\substack{0 \leq i \leq k \\ k \leq j \leq n \\ i + j = 2k}} \binom{n}{i} \binom{n}{j} (j - k).
\end{equation*}

\begin{lemma}\label{lem:homogeneous_degree}
If $G_0 \in \mathbb{F}_q^{k \times n}$, then the polynomial $P(X)$ is homogeneous of degree $d(n, k)$.
\end{lemma}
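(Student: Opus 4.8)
The key observation is that $P(X)$ is a product over all non-trivial minors $P_{I,J}(X)$, so it suffices to show each factor $P_{I,J}(X)$ is homogeneous of degree exactly $j - k$ in the indeterminates, where $j = |J|$; then the total degree is $\sum_{i,j,I,J} (j-k)$, and summing over all choices of $I$ (there are $\binom{n}{i}$) and $J$ (there are $\binom{n}{j}$) for each admissible pair $(i,j)$ with $i+j = 2k$, $0 \le i \le k \le j \le n$, gives exactly $d(n,k)$. The main work is therefore the claim: $P_{I,J}(X) = \det((G_1^c)_{I \cup J})$ is homogeneous of degree $j - k$.

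To see this, recall the block structure of $M(X) = (G_1^c)_{I \cup J}$: writing $J = J_1 \cup (n + J_2)$ where $J_1 \subseteq \{1,\dots,n\}$ indexes columns from the top-left copy of $G_0$ inside $G_1$'s zero block... more precisely, $J \subseteq \{n+1,\dots,2n\}$ selects columns of the right half $\binom{G_1}{G_0}$. Since $G_1 = (X~\mathbf{0}_{k\times k})$, only the first $n-k$ of the last $n$ columns of $G_1^c$ carry any indeterminates; the rest are constants. So among the $j$ columns of $J$, at most $n-k$ can contribute an indeterminate, and more to the point, the top block of $M(X)$ consists of the $i$ constant columns $(G_0)_I$ together with $|J|$ columns from $(X~\mathbf{0})$, while the bottom block consists of $i$ zero columns together with $(G_0)_J$. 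The plan is to expand $\det M(X)$ by Laplace expansion along the first $k$ rows (the rows passing through $(G_0)_I$ and $X$). Each term in this expansion is $\pm$ a product of a $k\times k$ minor taken from the top block and the complementary $k\times k$ minor from the bottom block; the bottom minor is a constant (a minor of $G_0$), and the top minor uses the $i$ constant columns of $(G_0)_I$ together with exactly $k - i = j - k$ columns drawn from $X'$. A minor of a matrix whose columns are either constant or single indeterminate-vectors, using $j-k$ of the variable columns, is homogeneous of degree $j-k$ in the $x$'s (expand that minor along its variable columns). Hence every surviving term of the Laplace expansion of $\det M(X)$ is homogeneous of degree $j-k$, so $P_{I,J}(X)$ is either zero or homogeneous of degree $j-k$; by Lemma~\ref{lem:nonzero} it is nonzero, so it is genuinely homogeneous of that degree.

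The one subtlety to handle carefully is the interaction of the zero columns of $G_1 = (X~\mathbf{0}_{k\times k})$: when $J$ selects columns that land in the $\mathbf{0}_{k\times k}$ part of $G_1$, those top entries are literally zero and contribute no variables, so the count "$j - k$ variable columns in the top block" should really be "at most $j-k$", which would only give degree $\le j - k$. This is exactly where Lemma~\ref{lem:nonzero} is doing double duty: it must be invoked not merely to say $P_{I,J} \neq 0$, but in the form of its proof, which exhibits an assignment making $P_{I,J}$ nonzero using precisely $j - k = k - i$ of the variable columns of $X'$ set to standard basis vectors. That witness shows the degree-$(j-k)$ part of $P_{I,J}$ is nonzero, pinning the degree down to exactly $j-k$ rather than something smaller. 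So the proof should: (i) establish each $P_{I,J}$ is a sum of monomials all of degree $\le j-k$ via Laplace expansion; (ii) cite the witness assignment from the proof of Lemma~\ref{lem:nonzero} to show the degree-$(j-k)$ homogeneous component is nonzero and in fact that all monomials have degree exactly $j-k$ (using that $P_{I,J}$, being a single minor with the described column structure, has all its monomials of the same degree — each monomial picks exactly one entry from each of the $j-k$ variable columns in the top Laplace minor); (iii) multiply up over all non-trivial minors and sum the degrees to get $d(n,k)$. The main obstacle is purely bookkeeping: correctly tracking which of the $j$ columns in $J$ are variable versus zero versus constant-nonzero, and confirming that the homogeneity degree of each minor depends only on $j$ (equivalently $i$), not on the particular sets $I, J$.
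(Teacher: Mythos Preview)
Your approach is correct but takes a different and more laborious route than the paper. The paper uses a one-line scaling argument: multiply the last $j$ columns of $M(X)$ by a scalar variable $t$ and the last $k$ rows by $t^{-1}$; this transforms $M(X)$ into $M(tX)$ while multiplying the determinant by $t^{j-k}$, so $P_{I,J}(tX) = t^{j-k} P_{I,J}(X)$, which is homogeneity of degree $j-k$ on the nose. The zero-block columns of $G_1$ are handled transparently (scaling a zero column is a no-op), and no Laplace bookkeeping is needed.

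Your Laplace-expansion argument works too, but the subtlety you flagged about the $\mathbf{0}_{k\times k}$ columns is less serious than you suggest: if any of the $j-k$ extra columns selected for the top $k\times k$ Laplace block comes from the zero part of $G_1$, that top minor has a zero column and vanishes outright. Hence every surviving term uses exactly $j-k$ columns of $X'$ and is homogeneous of degree $j-k$; you only need the bare statement of Lemma~\ref{lem:nonzero} (that $P_{I,J}\not\equiv 0$), not the internal witness. Your route yields a more explicit monomial description at the cost of extra case-checking; the paper's scaling trick is shorter and sidesteps the issue entirely.
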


\begin{proof}
Let $0 \leq i \leq k$ and $k \leq j \leq n$ be such that $i + j = 2k$. We shall show that all the factors $P_{I, J}(X)$ are homogeneous of degree $j - k$. Consider the matrix $M(X) = (G_1^c)_{I \cup J}$ and let $t$ be a variable. Multiply the last $j$ columns of $M(X)$ with $t$ and multiply the last $k$ rows with $t^{-1}$ to obtain the matrix $M'(X)$. It is clear that $M'(X) = M(t X)$. Further, by the multilinearity of the determinant, $\det(M'(X)) = t^{j - k} \det(M(X))$, so
\begin{align*}
    P_{I, J}(t X) &= \det(M(t X)) = \det(M'(X)) \\
    &= t^{j - k} \det(M(X)) = t^{j - k} P_{I, J}(X).
\end{align*}
This means that $P_{I, J}(X)$ is homogeneous of degree $j - k$. The formula for the total degree follows clearly from the definition of $P(X)$.
\end{proof}

The degree of the minor product polynomial follows an expected dual condition given by the following lemma.

\begin{lemma}\label{lem:dual_symmetry_of_degree}
For $n \geq 1$ and $0 \leq k \leq n$, $d(n, k) = d(n, n - k)$.
\end{lemma}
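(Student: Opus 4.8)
The plan is to re-index the defining sum so that the symmetry $k \leftrightarrow n-k$ becomes manifest. First I would eliminate $i$ via the constraint $i + j = 2k$, writing $i = 2k - j$; then the conditions $0 \le i \le k$ and $k \le j \le n$ collapse to $k \le j \le \min(2k, n)$, so the sum runs over the single variable $j$. Substituting $m = j - k$ converts this into
\[
d(n,k) = \sum_{m=0}^{\min(k,\, n-k)} m \binom{n}{k-m}\binom{n}{k+m},
\]
and one observes that the range $0 \le m \le \min(k, n-k)$ is already invariant under $k \mapsto n-k$.

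Next I would apply this closed form with $k$ replaced by $n-k$ and invoke the elementary identity $\binom{n}{a} = \binom{n}{n-a}$: since $\binom{n}{(n-k)-m} = \binom{n}{k+m}$ and $\binom{n}{(n-k)+m} = \binom{n}{k-m}$, the summand for $d(n, n-k)$ equals $m \binom{n}{k+m}\binom{n}{k-m}$, which is precisely the summand for $d(n,k)$. Since the summation ranges coincide, $d(n,k) = d(n,n-k)$. As an alternative phrasing, one can argue directly with the index pairs $(i,j)$: the map $(i,j) \mapsto (n-j,\, n-i)$ is a bijection from the pairs counted in $d(n,k)$ to those counted in $d(n,n-k)$, it preserves $\binom{n}{i}\binom{n}{j}$ by binomial symmetry, and it preserves the weight because $(n-i) - (n-k) = k - i = j - k$ using $i + j = 2k$; this yields a term-by-term matching of the two sums.

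I do not expect a genuine obstacle here — the statement is essentially a bookkeeping symmetry. The only point that requires care is verifying the summation ranges: after eliminating $i$ one must confirm that the surviving constraint on $j$ (equivalently on $m$) is truly symmetric in $k$ and $n-k$, and in the bijective version one must check that the image constraints are exactly those defining $d(n, n-k)$ rather than an off-by-one variant.
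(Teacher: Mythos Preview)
Your proposal is correct, and in fact your ``alternative phrasing'' via the bijection $(i,j)\mapsto(n-j,\,n-i)$ is exactly the paper's one-line proof (stated there as the change of variables $i'=n-j$, $j'=n-i$). Your first argument, rewriting the sum with the single index $m=j-k$, is just a repackaging of the same substitution and is equally valid.
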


\begin{proof}
The result follows by using the change of variables $i' = n - j$ and $j' = n - i$ in the expansion of $d(n, n - k)$.
\end{proof}

Instead of the total degree of $P(X)$, we may consider the degree of the individual variables. For $n \geq 1$ and $0 \leq k \leq n$ define
\begin{equation*}
    d'(n, k) = \sum_{\substack{0 \leq i \leq k - 1 \\ k + 1 \leq j \leq n \\ i + j = 2k}} \binom{n - 1}{i} \binom{n - 1}{j - 1}.
\end{equation*}

\begin{lemma}\label{lem:individual_degree}
If $G_0 \in \mathbb{F}_q^{k \times n}$ is in standard form, then the degree of any individual variable in $P(X)$ is at most $d'(n, k)$.
\end{lemma}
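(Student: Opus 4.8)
The plan is to bound the degree in a fixed individual variable, say $x_{a,b}$, by tracking which minor factors $P_{I,J}(X)$ can depend on it, and with what multiplicity. By Lemma~\ref{lem:homogeneous_degree}, each factor $P_{I,J}(X)$ is homogeneous of degree $j-k$, and since it is a determinant it is multilinear in the entries of $X$; hence each $P_{I,J}$ contributes at most $1$ to the degree of any single variable $x_{a,b}$. So the degree of $x_{a,b}$ in $P(X)$ is at most the number of pairs $(I,J)$ for which $P_{I,J}$ actually involves $x_{a,b}$. The variable $x_{a,b}$ sits in row $a$ and column $b$ of $X$, i.e.\ in row $a$ and column $n+b$ of $G_1^c$ (recalling $G_1=(X~\mathbf 0_{k\times k})$, so $b$ ranges over $1,\dots,n-k$). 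For $P_{I,J}$ to depend on $x_{a,b}$ we first need $n+b\in J$, which in particular forces $j=|J|\ge 1$ on the relevant range; but more importantly we need the entry $x_{a,b}$ to appear in the Laplace/Leibniz expansion of $\det((G_1^c)_{I\cup J})$ with a nonzero cofactor.

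The key structural point is that $G_0$ is in standard form, i.e.\ $G_0=(\mathrm{Id}_k \mid A)$ for some $A\in\mathbb F_q^{k\times(n-k)}$. First I would handle the trivial terms: the factors with $j=k$ (equivalently $i=k$) have degree $0$ by Lemma~\ref{lem:homogeneous_degree}, so they are nonzero constants and contribute nothing; this is why the defining sum for $d'(n,k)$ runs only over $k+1\le j\le n$, equivalently $0\le i\le k-1$. Next, for $j\ge k+1$, I would argue that $x_{a,b}$ can appear in $P_{I,J}$ only when the row index $a$ is \emph{not} selected by the identity part of $(G_0)_I$. Concretely, write the submatrix $M(X)=(G_1^c)_{I\cup J}$ in block form as in the proof of Lemma~\ref{lem:nonzero}: the top-left block is $(G_0)_I$, the top-right block is $X'$ (the columns of $X$ indexed by $\{\,b'\le n-k : n+b'\in J\,\}$), and the bottom-right block is $(G_0)_J$. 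Because $G_0$ is in standard form, each column of $(G_0)_I$ that comes from the first $k$ coordinates is a standard basis vector $e_\ell$; in the determinant expansion such a column can only be matched to row $\ell$, which pins down which rows of the top block are ``free'' to be matched against $X'$. This lets me replace the top-left $k\times k$ block effectively by the $i'\times i'$ block where $i'$ counts the columns of $I$ lying in the parity-check part $\{k+1,\dots,n\}$ of $G_0$, and reduces the count to a clean binomial expression.

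Carrying this out, I expect the count of pairs $(I,J)$ that genuinely involve $x_{a,b}$ to be exactly $\sum \binom{n-1}{i}\binom{n-1}{j-1}$ over $0\le i\le k-1$, $k+1\le j\le n$, $i+j=2k$: the factor $\binom{n-1}{i}$ counts choices of $I\subseteq\{1,\dots,n\}$ of size $i$ (the $-1$ reflecting a forced or forbidden coordinate tied to row $a$ via the standard form), and $\binom{n-1}{j-1}$ counts choices of $J\subseteq\{n+1,\dots,2n\}$ of size $j$ that contain the fixed column $n+b$. The main obstacle will be the bookkeeping in this last reduction: proving rigorously that the ``standard form'' hypothesis forces exactly one degree of freedom to be lost on each of the $I$- and $J$-sides (hence the two $n-1$'s rather than $n$'s), and ruling out massive cancellation or, conversely, double-counting when several identity columns interact. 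I would make this precise by a cofactor-expansion argument: expand $\det(M(X))$ along column $n+b$ of $G_1^c$, observe that the only entries in that column are the $k$ entries $x_{1,b},\dots,x_{k,b}$ of $X$ (and zeros from the bottom copy of $G_0$'s parity part, since column $n+b$ with $b\le n-k$ meets $G_1=(X~\mathbf 0)$'s nonzero block on top and $(G_0)_J$'s columns below), so the coefficient of $x_{a,b}$ is $\pm$ a single $(2k-1)\times(2k-1)$ minor of a matrix built from $G_0$ alone; this minor is nonzero precisely when a combinatorial ``interlacing'' condition on $I$ and $J\setminus\{n+b\}$ holds, and counting those configurations — using the MDS/superregularity of $G_0$ to guarantee nonvanishing whenever the shape is admissible — yields the bound $d'(n,k)$, after which the claim for $P(X)$ follows by summing these $\le 1$ contributions over all $(I,J)$, equivalently taking the product over all factors.
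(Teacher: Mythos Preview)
Your approach is essentially the paper's: bound $\deg_{x_{a,b}} P_{I,J}\le 1$ by multilinearity, discard the $|I|=|J|=k$ factors as constants, observe that $x_{a,b}$ can occur only if $n+b\in J$, use the standard-form hypothesis to force $a\notin I$, and count the surviving pairs as $\sum_{i+j=2k}\binom{n-1}{i}\binom{n-1}{j-1}=d'(n,k)$.

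Where you diverge from the paper is in working much harder than necessary. You aim to show the $(2k-1)\times(2k-1)$ cofactor of $x_{a,b}$ is \emph{nonzero} under an ``interlacing'' condition, invoking the MDS/superregularity of $G_0$. None of that is needed for an upper bound: it suffices to count the pairs $(I,J)$ for which $x_{a,b}$ \emph{could} appear, regardless of whether its coefficient actually vanishes. The paper simply lists three necessary conditions ($|I|<k$, $n+b\in J$, $a\notin I$) and counts; no cofactor analysis, no superregularity. Note also that the lemma's hypothesis is only ``standard form'', not MDS, so your appeal to superregularity is outside what is assumed. Your worry about ``massive cancellation or double-counting'' is likewise moot: the degree of $P(X)$ in $x_{a,b}$ is exactly $\sum_{I,J}\deg_{x_{a,b}}P_{I,J}$, so summing the $0/1$ contributions is automatic.

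One small correction: the bottom half of column $n+b$ in $G_1^c$ is the $b$th column of $G_0$, which for $b\le n-k<k$ is the standard basis vector $e_b$, not a zero or parity column. This does not affect your argument but your description of that column is off. Finally, the paper's justification for the condition $a\notin I$ is cleaner than your row-matching story: if $a\in I$ then column $a$ of $(G_0)_I$ is $e_a$, and after deleting row $a$ and the column of $x_{a,b}$ to form the cofactor, that column becomes zero, so the cofactor vanishes.
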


\begin{proof}
Consider sets $I \subseteq \{1, \dots, n\}$ and $J \subseteq \{n + 1, \dots, 2n\}$ such that $|I| \leq k$ and $|I \cup J| = 2k$. We are interested in computing $\deg_{x_{r, s}}(P_{I, J}(X))$ for some variable $x_{r, s}$, $1 \leq r \leq k$, $1 \leq s \leq n - k$. It is clear that $\deg_{x_{r, s}}(P_{I, J}(X)) \leq 1$, since we never multiply a variable with itself.
\begin{itemize}
    \item If $|I| = |J| = k$, then $\deg_{x_{r, s}}(P_{I, J}(X)) = 0$, since the submatrix formed by these columns is a block upper-triangular with $k \times k$ blocks, so the minor is $\det((G_0)_I) \cdot \det((G_0)_J) \in \mathbb{F}_q^*$.
    \item If $s \notin J$, then $x_{r, s}$ does not appear in $G_{I \cup J}$, so $\deg_{x_{r, s}}(P_{I, J}(X)) = 0$.
    \item If $r \in I$, then the $r$th standard basis vector is one of the columns of $G_{I \cup J}$. The minor corresponding to row $r$ and the column of $x_{r, s}$ will be zero, since the submatrix contains a zero column where the $r$th standard basis vector used to be. Therefore, the coefficient of $x_{r, s}$ is zero in $P_{I, J}(X)$ and $\deg_{x_{r, s}}(P_{I, J}(X)) = 0$.
\end{itemize}
Therefore, if $\deg_{x_{r, s}}(P_{I, J}(X)) = 1$, then $|I| < k$, $|J| > k$, $r \notin I$ and $s \in J$. With these observations, we find that
\begin{equation*}
    \deg_{x_{r, s}}(P_{I, J}(X)) \leq \sum_{\substack{0 \leq i \leq k - 1 \\ k + 1 \leq j \leq n \\ i + j = 2k}} \binom{n - 1}{i} \binom{n - 1}{j - 1} = d'(n, k). 
\end{equation*}
\end{proof}

Again, the individual degree follows a dual condition given by the following lemma.

\begin{lemma}\label{lem:dual_symmetry_of_individual_degree}
For $n \geq 1$ and $0 \leq k \leq n$, $d'(n, k) = d'(n, n - k)$.
\end{lemma}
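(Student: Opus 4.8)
The plan is to mimic the proof of Lemma~\ref{lem:dual_symmetry_of_degree}: apply the same change of variables to the expansion of $d'(n, n-k)$ and check that it produces the expansion of $d'(n, k)$. First I would write out
\begin{equation*}
    d'(n, n - k) = \sum_{\substack{0 \leq i \leq n - k - 1 \\ n - k + 1 \leq j \leq n \\ i + j = 2(n-k)}} \binom{n - 1}{i} \binom{n - 1}{j - 1},
\end{equation*}
and substitute $i' = n - 1 - j$ and $j' = n + 1 - i$ (the ``complementary'' indices, shifted so that the binomial coefficients match up correctly with the $j-1$ offset in the summand). Under this substitution $\binom{n-1}{j-1} = \binom{n-1}{n-j} = \binom{n-1}{i'}$ and $\binom{n-1}{i} = \binom{n-1}{n-1-i} = \binom{n-1}{j'-1}$, so each summand of $d'(n,n-k)$ equals the summand $\binom{n-1}{i'}\binom{n-1}{j'-1}$ of the putative expansion of $d'(n,k)$.

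Next I would verify that the substitution is a bijection between the two index sets. Starting from the constraints $0 \le i \le n-k-1$, $n-k+1 \le j \le n$, $i+j = 2(n-k)$: the identity $i'+j' = (n-1-j)+(n+1-i) = 2n - (i+j) = 2n - 2(n-k) = 2k$ gives the correct diagonal constraint for $d'(n,k)$; the bound $j \le n$ gives $i' = n-1-j \ge -1$, but combined with $i' + j' = 2k$ and $j' \le n$ one gets $i' \ge 2k - n$, and more importantly $i \ge 0$ gives $j' = n+1-i \le n+1$, while $i \le n-k-1$ gives $j' \ge k+2 \ge k+1$; similarly $j \ge n-k+1$ gives $i' \le k-1$ and $j \le n$ gives $i' \ge -1$. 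I would have to be a little careful that the endpoint cases ($i' = -1$, or $j' = n+1$, etc.) either do not occur or contribute zero because the binomial coefficient vanishes — e.g. $\binom{n-1}{-1} = 0$ and $\binom{n-1}{n} = 0$ — so that the ranges can be harmlessly tidied up to exactly $0 \le i' \le k-1$, $k+1 \le j' \le n$. This bookkeeping is the only real content of the proof.

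The main (mild) obstacle is getting the index shift exactly right: because the two binomial coefficients in the summand are $\binom{n-1}{i}$ and $\binom{n-1}{j-1}$ rather than both being symmetric about the same center, the naive substitution $i' = n - j$, $j' = n - i$ used in Lemma~\ref{lem:dual_symmetry_of_degree} does not immediately work, and one needs the shifted version above. Once the correct substitution is pinned down, the argument is a one-line symmetry observation together with the vanishing of out-of-range binomial coefficients, exactly parallel to the proof of Lemma~\ref{lem:dual_symmetry_of_degree}.
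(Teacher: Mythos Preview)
Your overall approach---a change of summation variables, as in Lemma~\ref{lem:dual_symmetry_of_degree}---is exactly what the paper does. But the details are off, and in fact off in a way that inverts your conclusion.

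You assert that the substitution $i' = n - j$, $j' = n - i$ ``does not immediately work'' because of the $j-1$ offset. It does work, cleanly, and it is precisely the substitution the paper uses. With $i' = n-j$ and $j' = n-i$ one has
\[
\binom{n-1}{i}=\binom{n-1}{n-1-i}=\binom{n-1}{j'-1},\qquad
\binom{n-1}{j-1}=\binom{n-1}{n-j}=\binom{n-1}{i'},
\]
and the range $0\le i\le n-k-1$, $n-k+1\le j\le n$, $i+j=2(n-k)$ maps bijectively to $0\le i'\le k-1$, $k+1\le j'\le n$, $i'+j'=2k$, with no endpoint pathologies to clean up.

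Conversely, your proposed shifted substitution $i'=n-1-j$, $j'=n+1-i$ does \emph{not} do what you claim: with those definitions $n-j=i'+1$ (not $i'$) and $n-1-i=j'-2$ (not $j'-1$), so the binomials do not transform to $\binom{n-1}{i'}\binom{n-1}{j'-1}$. The arithmetic in your second paragraph is simply mistaken. Drop the shift, use $i'=n-j$, $j'=n-i$, and the proof is one line with no need to invoke vanishing out-of-range binomial coefficients.
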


\begin{proof}
The result follows by using the change of variables $i' = n - j$ and $j' = n - i$ in the expansion of $d'(n, n - k)$.
\end{proof}

The following is a standard result and follows, for example, from the Combinatorial Nullstellensatz.

\begin{lemma}[{\cite[Lemma~2.1]{alon1999combinatorial}}]\label{lem:individual_degree_required_field_size}
Let $P(x) \in \mathbb{F}_q[x_1, \dots, x_n]$ be a nonzero polynomial such that the individual degree of any variable in $P(x)$ is at most $d$. If $q > d$, then there exists $\alpha = (\alpha_1, \dots, \alpha_n) \in \mathbb{F}_q^n$ such that $P(\alpha) \neq 0$.
\end{lemma}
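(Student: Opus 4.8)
The plan is to give a short, self-contained induction on the number of variables $n$, which is in effect the proof of \cite[Lemma~2.1]{alon1999combinatorial} specialized to the sets $S_i = \mathbb{F}_q$; at the end I indicate the one-line alternative via the Combinatorial Nullstellensatz. For the base case $n = 1$, the hypothesis says $P$ is a nonzero univariate polynomial of degree at most $d$, which has at most $d$ roots in $\mathbb{F}_q$; since $q > d$, some $\alpha_1 \in \mathbb{F}_q$ is not a root, i.e.\ $P(\alpha_1) \neq 0$.

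For the inductive step, assume the claim for $n - 1$ variables and write $P$ as a polynomial in $x_n$ with coefficients in $\mathbb{F}_q[x_1, \dots, x_{n-1}]$, say $P = \sum_{j=0}^{d} Q_j(x_1, \dots, x_{n-1}) \, x_n^j$. Since $P \neq 0$, there is a largest index $j^\star$ with $Q_{j^\star} \neq 0$. Each $Q_j$ still has individual degree at most $d$ in each of $x_1, \dots, x_{n-1}$, so by the induction hypothesis there is a point $(\alpha_1, \dots, \alpha_{n-1}) \in \mathbb{F}_q^{n-1}$ with $Q_{j^\star}(\alpha_1, \dots, \alpha_{n-1}) \neq 0$. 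Substituting this point, $P(\alpha_1, \dots, \alpha_{n-1}, x_n)$ is a univariate polynomial in $x_n$ with nonzero leading coefficient in degree $j^\star \le d$, hence a nonzero polynomial of degree at most $d < q$; by the base case there is $\alpha_n \in \mathbb{F}_q$ with $P(\alpha_1, \dots, \alpha_n) \neq 0$, which completes the induction.

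The argument is routine, and the only point requiring a little care is that a single tuple $(\alpha_1, \dots, \alpha_{n-1})$ must serve for the specialization in $x_n$; this is why one fixes attention on the single coefficient $Q_{j^\star}$ rather than treating all the $Q_j$ separately. As a shortcut, one may instead invoke \cite[Theorem~1.2]{alon1999combinatorial} directly: choose a monomial $\prod_i x_i^{t_i}$ of maximal total degree occurring in $P$ with nonzero coefficient; then $t_i \le d < q = |\mathbb{F}_q|$ for every $i$, so applying the Combinatorial Nullstellensatz with $S_i = \mathbb{F}_q$ produces a point $\alpha \in \mathbb{F}_q^n$ with $P(\alpha) \neq 0$.
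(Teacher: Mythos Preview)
Your proof is correct. The paper does not actually prove this lemma; it simply cites \cite[Lemma~2.1]{alon1999combinatorial} and remarks that the result follows from the Combinatorial Nullstellensatz, and your write-up supplies exactly the standard induction behind that cited lemma together with the one-line CN alternative the paper alludes to.
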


We are now ready to state the following existence result for MDP codes. Notice that the trivial bound $\binom{n}{k} \leq n^k$ gives $d'(n, k) = d'(n, n - k) = \mathcal{O}(n^{2(n - k)})$. 

\begin{theorem}
Let $k > n - k = \delta$ and $G_0 \in \mathbb{F}_q^{k \times n}$ be a generator matrix of an MDS code in standard form. If $q > d'(n, k) = \mathcal{O}(n^{2\delta})$, then there exists a matrix $X$ such that all nontrivial $2k \times 2k$ minors of $G_1^c$ are nonzero, where $G_1 = (X~\mathbf{0}_{k \times k})$. In other words, $G(z) = G_0 + G_1z$ generates an $(n, k, \delta)$ MDP convolutional code over $\mathbb F_q$.
\end{theorem}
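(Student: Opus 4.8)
The plan is to assemble the lemmas established above; at this point the theorem is essentially a bookkeeping statement, and no new construction is needed. First I would recall that, by Theorem~\ref{thm:MDP_criterion} applied with $L = \lfloor \delta/k \rfloor + \lfloor \delta/(n-k) \rfloor = 0 + 1 = 1$ (since $k > n - k = \delta$), together with the discussion preceding~\eqref{eqn:G_1^c} and Remark~\ref{rem:non_triv}, the polynomial matrix $G(z) = G_0 + G_1 z$ with $G_1 = (X~\mathbf{0}_{k \times k})$ generates an MDP convolutional code precisely when every nontrivial $2k \times 2k$ minor of $G_1^c$ is nonzero, i.e.\ when every factor $P_{I,J}(X)$ of the minor product polynomial $P(X)$ is nonzero, which is exactly the condition $P(X) \neq 0$. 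Hence it suffices to exhibit an assignment $X \in \mathbb{F}_q^{k \times (n-k)}$ at which $P$ does not vanish.

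Next I would invoke the lemmas in order. By Lemma~\ref{lem:nonzero}, $P \in \mathbb{F}_q[x_{1,1}, \ldots, x_{k, n-k}]$ is not identically zero. By Lemma~\ref{lem:individual_degree}, because $G_0$ is in standard form, the degree of each individual variable in $P$ is at most $d'(n,k)$. Therefore Lemma~\ref{lem:individual_degree_required_field_size} applies: whenever $q > d'(n,k)$ there is a point $\alpha \in \mathbb{F}_q^{k(n-k)}$ with $P(\alpha) \neq 0$, and taking $X$ to be the matrix whose entries are the coordinates of $\alpha$ makes all nontrivial $2k \times 2k$ minors of $G_1^c$ nonzero, which proves the existence claim. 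For the asymptotic form of the bound, Lemma~\ref{lem:dual_symmetry_of_individual_degree} gives $d'(n,k) = d'(n, n-k)$, whose defining sum ranges over $0 \le i \le \delta - 1$ (so it has at most $\delta$ terms) with each summand $\binom{n-1}{i}\binom{n-1}{j-1} \le n^{i+j-1} = n^{2\delta - 1}$ by the trivial estimate $\binom{n-1}{\ell} \le n^\ell$; hence $d'(n,k) = \mathcal{O}(n^{2\delta})$.

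Finally I would verify that the resulting code genuinely has degree $\delta$, so that it is an $(n,k,\delta)$ code rather than one of smaller degree. Since the $z$-part of $G(z)$ is carried only by the $n-k$ columns of $X$, every full-size minor of $G(z)$ has $z$-degree at most $n-k$; conversely, pick $I \subseteq \{n-k+1, \ldots, n\}$ with $|I| = 2k - n$ and $J = \{n+1, \ldots, 2n\}$. The corresponding nontrivial minor of $G_1^c$ factors, by the block structure of $G_1^c$, as $\pm\det((G_0)_I~X) \cdot \det((G_0)_J)$, and since it is nonzero and $\det((G_0)_J) \neq 0$ ($G_0$ being MDS) we get $\det((G_0)_I~X) \neq 0$; this determinant is, up to sign, the coefficient of $z^{n-k}$ in the full-size minor of $G(z)$ on the columns $\{1, \ldots, n-k\} \cup I$, so the degree is exactly $n - k = \delta$. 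I do not expect any real obstacle at this stage: all the substantive work has been done, namely the explicit invertible specialization underlying Lemma~\ref{lem:nonzero} and the refined individual-degree bound of Lemma~\ref{lem:individual_degree} (which is what keeps the required field size polynomial in $n$ rather than exponential in $\delta$); the remaining argument is just the routine application of the Combinatorial Nullstellensatz through Lemma~\ref{lem:individual_degree_required_field_size}.
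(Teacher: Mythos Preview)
Your proposal is correct and follows exactly the paper's route: invoke Lemma~\ref{lem:nonzero} for $P\not\equiv 0$, Lemma~\ref{lem:individual_degree} for the individual-degree bound $d'(n,k)$, and Lemma~\ref{lem:individual_degree_required_field_size} to produce a nonvanishing point; the paper's own proof consists of precisely these three citations, so your first two paragraphs are a faithful (and more explicit) expansion of it. Your third paragraph, verifying that the degree is exactly $\delta$, is an extra check the paper omits; the argument is sound, but note a notational slip: with $J=\{n+1,\dots,2n\}$ the expression ``$\det((G_0)_J)$'' is ill-defined---the second factor in your block-triangular expansion is actually $\det\bigl((G_0)_{\{n-k+1,\dots,n\}}\bigr)$, the bottom-right $k\times k$ block of the chosen submatrix, which is indeed nonzero since $G_0$ is MDS.
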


\begin{proof}
As $P(x) \not\equiv 0$ and $q$ is larger than the individual degree of any of the variables by Lemma~\ref{lem:individual_degree}, then by Lemma~\ref{lem:individual_degree_required_field_size} there exists a matrix $X \in \mathbb{F}_q^{k \times (n - k)}$ such that $P(X) \neq 0$. This means that all nontrivial $2k \times 2k$ minors of $G_1^c$ are nonzero for this choice of $X$.
\end{proof}

\section{Construction}\label{sec:construct}

In this section, we give an explicit construction for a matrix $G_1^{c}$ given by equation \eqref{eqn:G_1^c} 
such that $G(z)=G_0+G_1z$ is a generator matrix for a partial unit-memory MDP convolutional code. To reduce encoding and decoding complexity, we first fix $G_0\in\mathbb F_{q}^{k \times n}$ to be a structured MDS matrix over a field of smallest possible size. In the next step, we search for $G_1$ of the form $G_1 = (X~\mathbf{0}_{k \times k})$ where $X\in\mathbb F_{q^d}^{k\times (n-k)}$  minimizes the number of nonzero entries which lie in the field extension $\mathbb F_{q^d}$ of $\mathbb{F}_q$ with $d$ as small as possible.

\begin{lemma}\label{lem:MDS}
If $G_1 = (X~\mathbf{0}_{k \times k})$ and the convolutional code generated by $G(z) = G_0 + G_1z$ is MDP, then $(G_0~X)$ has the MDS property.
\end{lemma}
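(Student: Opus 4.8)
The plan is to show that if $(G_0~X)$ fails the MDS property, then we can exhibit a nontrivial $2k \times 2k$ minor of $G_1^c$ that vanishes, contradicting Theorem~\ref{thm:MDP_criterion}. Observe that $(G_0~X)$ is a $k \times (2n - k)$ matrix, so the MDS property means every $k \times k$ submatrix is invertible. Suppose for contradiction that some $k$ columns of $(G_0~X)$ are linearly dependent; say this selection uses columns indexed by $I_0 \subseteq \{1,\dots,n\}$ from $G_0$ and columns indexed by $S \subseteq \{1,\dots,n-k\}$ from $X$, with $|I_0| + |S| = k$. Since $G_0$ itself is MDS (a necessary condition recorded after Theorem~\ref{thm:MDP_criterion}), any $k$ columns of $G_0$ alone are independent, so we must have $|S| \geq 1$, hence $|I_0| \leq k - 1$ and $|I_0| + |S| = k$ with $S$ nonempty.

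The key step is to assemble these columns into a genuine nontrivial minor of $G_1^c = \begin{pmatrix} G_0 & G_1 \\ \mathbf{0} & G_0 \end{pmatrix}$ whose determinant equals (up to sign and a nonzero scalar) the dependent $k\times k$ determinant we started with. Recall $G_1 = (X~\mathbf{0}_{k\times k})$, so the top-right block of $G_1^c$ has its $X$ part occupying columns $n+1,\dots,n+(n-k)$ and zeros in columns $n+(n-k)+1,\dots,2n$. I would choose for the minor: from the first $n$ columns, the set $I_0$ (so at most $k-1 \le k$ columns, satisfying the constraint $|I| \le k$); and from the last $n$ columns, all of $\{n+1,\dots,2n\}$ except we must end up with exactly $2k$ columns total. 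Since $|I_0| \le k-1$, we need $2k - |I_0| = k + |S|$ columns from the last $n$; I would take the $|S|$ columns of $X$ indexed by $S$ (columns $n+s$ for $s \in S$) together with an appropriate set of $k$ columns among the pure-$G_0$ block (columns $n+1,\dots,2n$ that lie in the $G_0$ copy), chosen so that the corresponding $k\times k$ submatrix of the bottom $G_0$ is invertible — this is possible because $G_0$ is MDS. With this choice, the submatrix of $G_1^c$ has a block structure: the bottom-left block is zero, the bottom-right $k\times k$ block is an invertible submatrix of $G_0$, so the determinant factors as that nonzero determinant times the determinant of the top-left $k \times k$ block, which is exactly the submatrix of $(G_0~X)$ formed by columns $I_0 \cup S$ — the dependent one. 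Hence this nontrivial minor is zero, contradicting MDP.

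I would then verify carefully that the column selection genuinely satisfies the nontriviality condition of Theorem~\ref{thm:MDP_criterion} / Remark~\ref{rem:non_triv}: we chose $|I| = |I_0| \le k$ from the first $n$ columns and the remaining $2k - |I_0| \ge k$ columns from the last $n$, and the total is $2k$, which is exactly the shape of a nontrivial minor (at most $k$ columns from the first block). The main obstacle — really the only subtlety — is the bookkeeping in the previous paragraph: making sure that within the last $n$ columns of $G_1^c$ we can simultaneously (i) pick up the $X$-columns indexed by $S$ (which live in the top-right block, column positions $n+1,\dots,2n-k$) and (ii) pick up enough columns of the bottom $G_0$-copy to form an invertible $k\times k$ block while the top-right portion of those chosen columns is all zero. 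Since the $X$-columns sit in positions $n+1,\dots,2n-k$ and the "pure zero on top" columns sit in positions $n + (n-k) + 1, \dots, 2n$, and these $k$ positions give an invertible $k\times k$ submatrix of $G_0$ precisely because the last $k$ columns of any generator matrix in standard form (or any $k$ columns of an MDS matrix) are independent — one must be slightly careful that choosing $S$ from the left part and $k$ columns from the right part of the bottom block does not force the bottom-right block to be singular; invoking the MDS property of $G_0$ again resolves this. Once the indexing is nailed down, the determinant factorization is immediate and the contradiction follows.
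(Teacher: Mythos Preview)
Your argument is essentially the paper's own proof, recast as a contrapositive: pick the $k$ columns of $(G_0~X)$ in question together with the last $k$ columns of $G_1^c$, observe the resulting $2k\times 2k$ submatrix is block triangular, and factor the determinant as $m_1\cdot m_2$ with $m_2$ a nonzero $k\times k$ minor of $G_0$. One small slip: you write that ``the bottom-left block is zero,'' but in fact it is the \emph{top-right} $k\times k$ block that vanishes (the last $k$ columns of $G_1^c$ have zero top half since $G_1=(X~\mathbf{0}_{k\times k})$), while the bottom-left block contains $(G_0)_S$; the matrix is block \emph{lower} triangular, not upper. This does not affect the conclusion, since $\det\begin{pmatrix}A & 0\\ C & D\end{pmatrix}=\det(A)\det(D)$ just the same, and your worries at the end about overlapping column choices and invertibility of the bottom-right block are unnecessary: the $S$-columns sit in positions $n+1,\dots,2n-k$ and the last $k$ columns sit in $2n-k+1,\dots,2n$, automatically disjoint, and the latter give an invertible $k\times k$ submatrix of $G_0$ by the MDS property.
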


\begin{proof}
Let $m_1$ be any $k\times k$-minor of $(G_0~X)$. If we choose in $G_1^c$ the corresponding $k$ columns together with the last $k$ columns of $G_1^c$, then the corresponding non-trivial minor has the form $m_1\cdot m_2$ where $m_2$ is a minor of $G_0$ (due to the block diagonal structure of the submatrix corresponding to the minor). Hence, according to Theorem \ref{thm:MDP_criterion}, $m_1$ as to be nonzero and hence, $(G_0~X)$ has the MDS property.
\end{proof}

For our construction, we start with a superregular matrix $G_0\in\mathbb F_q^{k \times n}$ and consider the matrix $X \in \mathbb{F}_{q^d}^{k \times (n - k)}$ given by 
\small
\begin{equation}\label{eqn_Matrix_X}
    X=\begin{pmatrix}
        \alpha & 0 & 0 & \cdots & 0 \\
        0 & \alpha^2 & 0 & \cdots & 0 \\
        0 & 0 & \alpha^3 & \cdots & 0 \\
        \vdots & & & \ddots&\vdots \\
        0 & \cdots  & \cdots& 0 & \alpha^{n - k}\\
        0 & 0 & 0 & \cdots & 0 \\
        \vdots & & & & \vdots \\
        0 & 0 & 0 & \cdots & 0 \\
    \end{pmatrix},
\end{equation}
\normalsize
where $\alpha \in \mathbb{F}_{q^d}$ has minimal polynomial over $\mathbb F_q$ of degree $d$.

\begin{theorem}\label{diag}
Let $k>n-k=\delta$ and $d = \left \lceil\frac{\delta^2-1}{4}\right \rceil + 1$. Let $G_0 \in \mathbb F_{q}^{k\times n}$ be a superregular matrix and $G_1 = (X~\mathbf{0}_{k \times k}) \in \mathbb{F}_{q^d}^{k \times n}$ with $X$ given by equation \eqref{eqn_Matrix_X}. Then, all nontrivial minors of
\begin{equation*}
    G_1^c = \begin{pmatrix}
        G_0 & G_1 \\
        \mathbf{0} & G_0
    \end{pmatrix}
\end{equation*}
are nonzero, i.e., we get an $(n,k,\delta)$-MDP code over $\mathbb F_{q^d}$.
\end{theorem}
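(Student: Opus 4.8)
The plan is to verify the MDP criterion of Theorem~\ref{thm:MDP_criterion} directly. Fix an arbitrary nontrivial $2k\times 2k$ minor of $G_1^c$, indexed by column sets $I\subseteq\{1,\dots,n\}$ and $J\subseteq\{n+1,\dots,2n\}$ with $|I|\le k\le|J|\le n$ and $|I|+|J|=2k$; set $J'=J-n$ and $m=|J|-k=k-|I|$. The corresponding submatrix is
\begin{equation*}
M=\begin{pmatrix}(G_0)_I & (G_1)_{J'}\\ \mathbf{0} & (G_0)_{J'}\end{pmatrix},
\end{equation*}
where $(G_0)_C$ denotes the column submatrix of $G_0$ indexed by $C$, and I must show $\det M\neq 0$. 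Since the first $|I|$ columns of $M$ vanish in the bottom $k$ rows, the (generalized) Laplace expansion along those columns collapses to a sum over the $m$-element subsets $T_0$ of $\{1,\dots,k\}$ (indexing the surviving rows by their complement within $\{1,\dots,k\}$); writing $(G_0)_{R,C}$ for the submatrix on rows $R$ and columns $C$, the $T_0$-term is, up to sign, $\det\!\big((G_0)_{\{1,\dots,k\}\setminus T_0,\,I}\big)$ times the determinant of the $|J|\times|J|$ matrix obtained by stacking $(G_1)_{T_0,J'}$ on top of $(G_0)_{J'}$. By the diagonal shape of $X$, row $i$ of $G_1$ is zero for $i>\delta$ and otherwise has its unique nonzero entry $\alpha^i$ in column $i$; hence that stacked determinant vanishes unless $T_0\subseteq A:=J'\cap\{1,\dots,\delta\}$, and when $T_0\subseteq A$ a second Laplace expansion along the $m$ rows coming from $G_1$ (each with a single nonzero entry) evaluates it to $\pm\,\alpha^{\sigma(T_0)}\det\!\big((G_0)_{J'\setminus T_0}\big)$ with $\sigma(T_0)=\sum_{i\in T_0}i$. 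Collecting terms,
\begin{equation*}
\det M=\sum_{\substack{T_0\subseteq A\\ |T_0|=m}}c_{T_0}\,\alpha^{\sigma(T_0)},\qquad c_{T_0}=\pm\det\!\big((G_0)_{\{1,\dots,k\}\setminus T_0,\,I}\big)\,\det\!\big((G_0)_{J'\setminus T_0}\big).
\end{equation*}
Here $(G_0)_{\{1,\dots,k\}\setminus T_0,\,I}$ is a $(k-m)\times(k-m)$ submatrix and $(G_0)_{J'\setminus T_0}$ a $k\times k$ submatrix of the superregular matrix $G_0$, so every $c_{T_0}$ lies in $\mathbb{F}_q^{*}$; and since at most $k$ of the $|J|$ indices of $J'$ can lie outside $\{1,\dots,\delta\}$, one has $|A|\ge|J|-k=m$, so the sum is nonempty.

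It remains to show that $p(T)=\sum_{T_0\subseteq A,\,|T_0|=m}c_{T_0}T^{\sigma(T_0)}\in\mathbb{F}_q[T]$ does not vanish at $\alpha$. Its smallest exponent $e_{\min}$ is the sum of the $m$ smallest elements of $A$ and is attained by that subset alone, so the coefficient of $T^{e_{\min}}$ in $p$ does not cancel; write $p(T)=T^{e_{\min}}\widetilde p(T)$ with $\widetilde p(0)\neq 0$. Since $\alpha\neq 0$ (its minimal polynomial has degree $d\ge 1$), $\det M=\alpha^{e_{\min}}\widetilde p(\alpha)$, so it is enough to show $\widetilde p(\alpha)\neq 0$, and for that it is enough that $\deg\widetilde p<d$, because no nonzero polynomial over $\mathbb{F}_q$ of degree less than $d$ can have $\alpha$ as a root. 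Writing $A=\{a_1<\dots<a_a\}\subseteq\{1,\dots,\delta\}$ and letting $e_{\max}$ be the sum of the $m$ largest elements of $A$, one gets
\begin{equation*}
\deg\widetilde p=e_{\max}-e_{\min}=\sum_{j=1}^{m}\big(a_{a-m+j}-a_j\big)\le\sum_{j=1}^{m}(\delta-m)=m(\delta-m),
\end{equation*}
using $a_{a-m+j}\le\delta-m+j$ and $a_j\ge j$. Since $\max_{0\le m\le\delta}m(\delta-m)=\lfloor\delta^2/4\rfloor=\lceil(\delta^2-1)/4\rceil=d-1$, we get $\deg\widetilde p\le d-1<d$, hence $\widetilde p(\alpha)\neq 0$ and $\det M\neq 0$. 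As the minor was arbitrary, all nontrivial minors of $G_1^c$ are nonzero, so $G(z)=G_0+G_1z$ generates an $(n,k,\delta)$ MDP code over $\mathbb{F}_{q^d}$ by Theorem~\ref{thm:MDP_criterion}.

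The step I expect to be the main obstacle is this final degree estimate. The total degree of $p$ can be of order $\delta^2/2$, much larger than $d\approx\delta^2/4$, so a direct bound on $\deg p$ cannot work. The key observation is that every exponent $\sigma(T_0)$ is at least $e_{\min}$ (which itself may be of order $\delta^2/2$), so dividing out the common factor $\alpha^{e_{\min}}$ — legitimate precisely because $\alpha\neq 0$ — replaces $\deg p$ by the much smaller \emph{spread} $e_{\max}-e_{\min}$ of the exponent set. Recognizing that this spread is at most the combinatorial quantity $m(\delta-m)$ and that its maximum over $m$ equals exactly $d-1$ is what pins down the value $d=\lceil(\delta^2-1)/4\rceil+1$. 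The remaining ingredients — the exact block and sign bookkeeping in the two nested Laplace expansions and the counting check $|A|\ge m$ — are routine.
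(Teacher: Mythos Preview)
Your proof is correct and follows essentially the same strategy as the paper's: expand the minor as a polynomial in $\alpha$ over $\mathbb{F}_q$, identify the lowest exponent as coming from a unique choice of rows (so its coefficient is a nonzero product of minors of the superregular $G_0$), and bound the exponent spread by $m(\delta-m)\le\lfloor\delta^2/4\rfloor=d-1$, which is below the degree of the minimal polynomial of $\alpha$. The only cosmetic difference is that you reach the expansion via two nested Laplace expansions while the paper uses the Leibniz formula directly with explicit sign bookkeeping; your route is slightly cleaner but the key ideas and the decisive degree-spread estimate are identical.
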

\begin{proof}

For integers $i, j, \ell$, we consider any $2k \times 2k$ submatrix $M$ of $G_1^c$ by choosing $i$ columns out of first $n$ columns, $j$ columns out of the next $n-k$ and $\ell$ columns out of the last $k$ columns. If $M$ corresponds to a non-trivial minor, then
$$
    i +j +\ell = 2k,\quad \text{ where } \quad 1 \leq i \leq k, \  0 \leq j \leq n-k.
$$
Thus, we can write 
$$
 M= \begin{pmatrix}
    U & V \\
    \mathbf{0} & W\\
\end{pmatrix}\quad\text{with}\quad V=[\hat{V}\ 0_{k\times\ell}], 
$$
where $U$ is a submatrix of $G_0$ of size $k \times i$, $V$ is a submatrix of $G_1$ of size $k \times (j+\ell)$, and $W$ is a submatrix of $G_0$ of size $k \times (j+\ell)$. 
We need to show that $\det(M) \neq 0$.  
When viewed as a polynomial in $\alpha$, we need to consider the highest and lowest exponent of $\alpha$ in $\det(M)$ in comparison to the degree of the minimal polynomial of $\alpha$.
We determine the determinant of $M = (m_{\nu, \mu})_{2k \times 2k}$ using the Leibniz formula given by 
$$
    \det(M) = \sum_{\sigma \in S_{2k}} \mathrm{sgn}(\sigma) \prod_{\mu =1}^{2k} m_{\sigma(\mu), \mu}
$$
We only need to consider $\sigma \in S_{2k}$ such that $m_{\sigma(\mu), \mu} \neq 0$ for all $1 \leq \mu \leq 2k$. 
Since $G_0$ is a superregular matrix, each entry of $U$ and $W$ is nonzero. Let the nonzero entries of $V$ (which are obtained from columns of $X$) be $\alpha^{t_1}, \ldots, \alpha^{t_j}$, where
\begin{equation}\label{eqn:r1<rj}
    1 \leq t_1 < t_2 < \cdots < t_j \leq n-k.
\end{equation}
Let $T:= \{ t_1, \ldots, t_j \} \subseteq [n-k]$ so that $m_{t_v, i+v} = \alpha^{t_v}$ for $1 \leq v \leq j$. 
 For $S, S'\subseteq[2k]$, let $M_{S,S'}$ denote the submatrix of $M$ formed by selecting the rows of the index set $S$ and columns of the index set $S'$. We need to consider $\sigma \in S_{2k}$ such that 
$$
 \sigma(\mu ) \in  \begin{cases}
     [k] &{ \text{ if } 1 \leq \mu \leq i}, \\
     T \cup [2k]\setminus[k] &{ \text{ if }  i+1 \leq \mu \leq i+j}, \\
     [2k]\setminus[k] &{ \text{ if }  i+j+1 \leq \mu \leq 2k}. \\
 \end{cases}
$$
For each such $\sigma$ we have to choose $k$ entries of $M$ in the first $k$ rows, i.e. $k$ entries from $[U\ \ \hat{V}]$. As we also need to choose $i$ entries in the first $i$ columns, i.e. $i$ entries of $U$, we choose exactly $k-i$ entries in $\hat{V}$.
We denote by $R\subset[j]$ the set of indices of the $k-i$ columns of $\hat{V}$ corresponding to this $k-i$ chosen entries and define by $t_R = \{t_r \mid r \in R\}$ the set of corresponding rows.
Therefore, 
\begin{align*}
\det(M) &= \sum_{\sigma \in S_{2k}} \mathrm{sgn}(\sigma) \prod_{\mu =1}^{2k} m_{\sigma(\mu), \mu}\\
&=\sum_{\substack{R\subset [j]\\ |R|=k-i}}\sum_{\substack{\sigma \in S_{2k}\\ \sigma(i+r) = t_r \forall r \in R}}\hspace{-6mm}\mathrm{sgn}(\sigma)\prod_{r\in R}\alpha^{t_r} \hspace{-3mm}\prod_{\mu \in[2k]\setminus\{i + R \}}\hspace{-5mm}m_{\sigma(\mu), \mu}\\
&=\sum_{\substack{R\subset [j]\\ |R|=k-i}}\prod_{r\in R}\alpha^{t_r} \hspace{-3mm} \sum_{\substack{\sigma \in S_{2k}\\ \sigma(i+r)=t_r \forall r\in R\\ \sigma([i])\subset[k]\setminus t_R
}}\hspace{-6mm}\mathrm{sgn}(\sigma) \prod_{\mu \in[2k]\setminus\{i + R\}}\hspace{-5mm}m_{\sigma(\mu), \mu}.
\end{align*}
Hence, $\det(M)$ has the form
$$
\det(M) = \sum_{v=\lambda}^\rho A_v\alpha^v \quad \text{with} \quad \rho=\sum_{r = 1}^{k - i} t_{j + 1 - r},\   \lambda=\sum_{r=1}^{k-i} t_r
$$
To get the coefficient of the lowest term, set $R = [k-i]$, i.e.
$$
    A_\lambda = \sum_{\substack{\sigma \in S_{2k}\\ \sigma(i+r)=t_r,\ r \in [k-i]\\ \sigma([i])\subset[k]\setminus t_{[k-i]}
}}\mathrm{sgn}(\sigma) \prod_{\mu \in[i]\cup ([2k]\setminus[k])} m_{\sigma(\mu), \mu}.
$$

 Recall that $\mathrm{sgn}(\sigma)=(-1)^N$ where $N$ is the number of inversions defined via $1\leq a<b\leq 2k$ with $\sigma(a)>\sigma(b)$,  where $a,b$ are choices of columns.
We split the columns of $M$ into $3$ blocks with column indices 
$$C_1=[i],\quad C_2=R+i=[k]\setminus[i],\quad C_3=[2k]\setminus[k]$$
and denote by $N_{\eta,\beta}$ the number of inversions of $\sigma$ with $a\in C_\eta$ and $b\in C_\beta$ for $1\leq \eta\leq \beta\leq 3$, i.e.
$$N=\sum_{1\leq \eta\leq \beta\leq 3}N_{\eta,\beta}=N_{1,1}+N_{1,2}+N_{3,3}$$
since $N_{\eta,\beta}=0$ for $(\eta,\beta)\in\{(2,2),(1,3),(2,3)\}$.
Also note that $N_{1,2}$ only depends on $R$ and not on $\sigma$ (for $\sigma$ as above). Define 
$$\hat{\sigma}_1:=\sigma|_{[i]}^{},\quad
 \hat{\sigma}_2:=\sigma|_{[2k]\setminus[k]}^{},$$  
  $$\hat{\sigma}_1([i])=\{\gamma_1,\hdots,\gamma_{i}\}\ \text{with}\  \gamma_1<\cdots<\gamma_{i}$$ 
 and define $\sigma_1:=f\circ\hat\sigma_1$ with $f(\gamma_z)=z$ for $z\in[i]$.
  Write 
  $\sigma_2(y):=\hat\sigma_2(y+k)-k$ for $y \in [k]$. 
  Then, 
 \begin{align*}
     (-1)^{N_{1,1}} = \mathrm{sgn}(\sigma_1),\ (-1)^{N_{3,3}} = \mathrm{sgn}(\sigma_2) , \\
 \text{and } \mathrm{sgn}(\sigma) = (-1)^{N_{1,2}} \mathrm{sgn}(\sigma_1)\mathrm{sgn}(\sigma_2).
 \end{align*}

Thus, the coefficient $A_{\lambda}$ of the lowest term in $\det(M)$ is 
\begin{align*}
 & (-1)^{N_{1,2}} \hspace{-1.5mm}\sum_{\substack{\sigma_1\in S_{i}\\ \sigma_2\in S_{k}
}}\hspace{-1mm}\mathrm{sgn}(\sigma_1)\mathrm{sgn}(\sigma_2)\hspace{-0.5mm}\prod_{\mu \in[i] }\hspace{-0.5mm}u_{\hat{\sigma}_1(\mu), \mu}\hspace{-2.5mm}\prod_{\mu \in [2k]\setminus[k]}\hspace{-2.5mm}w_{\hat{\sigma}_2(\mu)-k, \mu-i}\\
&=(-1)^{N_{1,2}} \det(U_{[k]\setminus t_{[k-i]},[i]})\det(W_{[k],[2k-i]\setminus[k-i]})
\end{align*}

Since $U_{[k]\setminus t_{[k-i]},[i]}$ and $W_{[k],[2k-i]\setminus[k-i]}$ are submatrices of the superregular matrix $G_0$, one obtains $A_\lambda  \neq 0$ and thus, $\det(M)$ is not the zero polynomial in $\alpha$. 

It remains to show that
\begin{equation}\label{degree}
  \rho -\lambda =  \sum_{r = 1}^{k - i} t_{j + 1 - r} - \sum_{r=1}^{k-i} t_r
\end{equation}
is smaller than the degree of the minimal polynomial of $\alpha$. 
Using equation  \eqref{eqn:r1<rj}, one gets 
$s \leq t_s \leq n-k-(j-s)$ for $s\in[j]$, i.e.

\begin{align*}
        \left(\sum\limits_{r=1}^{k-i}t_{j+1-r} \right)- 
\left(\sum\limits_{r=1}^{k-i}t_r\right) &= \sum\limits_{r=1}^{k-i} (t_{j+1-r} - t_{r}) \\
&\leq  \sum_{r=1}^{k-i} [(n-k-r + 1)-r]\\
&= \sum\limits_{t=1}^{k-i} [n-k-2t+1] \\&
= (k-i) (n-k+1)-2 \sum\limits_{r=1}^{k-i} r\\&=(k-i)(n-k+1-(k-i+1))
\end{align*}

But $(k-i) (n-k-(k-i))$ attains the maximal value  when $k-i = \lfloor \frac{ \delta }{2} \rfloor$ where $\delta = n-k$ and 
$$
\max_{i \in [k]} [(k-i) (n-k-(k-i))] = \begin{cases}
    \frac{\delta^2}{4} &{ \text{if } \delta \text{ is even},}\\
    \frac{\delta^2-1}{4} &{ \text{if } \delta \text{ is odd}}.\\ 
\end{cases}
$$
Hence equation \eqref{degree} is smaller than $d$, which concludes the proof.
      \end{proof}

\begin{remark}
    Note that $G_1$ has minimum number of non-zero entries, i.e., $n-k$; see Remark \ref{minrank}. Furthermore, we choose $G_0$ to be a Cauchy matrix to minimize encoding/decoding complexity. Note that this requires $q\geq n+k$. 
\end{remark}

\section{Computational complexity}
\label{sec:comp}
Let $G(z) \in \mathbb{F}_{q^d}[z]^{k \times n}$ be a generator matrix of a convolutional code and $u(z) \in \mathbb{F}_{q^d}[z]^k$ a message. The encoding of $u(z)$ is $v(z) = u(z)G(z)$. If the convolutional code is (partial) unit-memory, then the encoding process consists of multiplying a vector over $\mathbb F_{q^d}$ with the matrices $G_0$ and $G_1$, where $G(z) = G_0 + G_1z$. In particular,
\begin{align}\label{eq:enc}
\begin{split}
    u(z)G(z) 
   & = u_0 G_0 + \sum_{j=1}^{\deg(u(z)) + 1} (u_{j-1} G_1 + u_{j} G_0) z^j.\\
\end{split}
\end{align}

In step $j \in [\deg(u)+1]$ of \eqref{eq:enc} we compute $u_{j-1} G_1 + u_{j} G_0$, where $u_{j-1}, u_{j} \in \mathbb{F}_{q^d}^k$.
For simplicity, we simply count the multiplications (as is the convention in algebraic complexity theory \cite{bürgisser1996algebraic}) and give the complexity in terms of $\mathcal{M}(q)$, the complexity of multiplying two elements in the  field $\mathbb{F}_q$. 

\begin{theorem}
    The encoding complexity (\emph{i.e.,} the number of arithmetic operations) of each time step $j$ in \eqref{eq:enc} for the code constructed in Section~\ref{sec:construct} is given by 
    $\mathcal{O}(d [ n \log^2 n  + (n-k)]\mathcal{M}(q))=\mathcal{O}(d  n \log^2 n\mathcal{M}(q))$
    where $\mathcal{M}(q)$ is the complexity of multiplying two elements in the base field $\mathbb{F}_q$.
\end{theorem}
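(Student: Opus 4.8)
The plan is to bound separately the two matrix--vector products that make up one step of \eqref{eq:enc} and then add the costs. Fix a time step $j$: the work is to compute $u_{j-1}G_1 + u_j G_0$ with $u_{j-1}, u_j \in \mathbb{F}_{q^d}^k$, where $G_0 \in \mathbb{F}_q^{k\times n}$ is the Cauchy matrix fixed in Section~\ref{sec:construct} and $G_1 = (X~\mathbf{0}_{k\times k})$ with $X$ as in \eqref{eqn_Matrix_X}. I will bound $u_{j-1}G_1$ and $u_j G_0$ in turn, and observe that combining them requires only $n$ coordinate-wise additions in $\mathbb{F}_{q^d}$, i.e.\ $\mathcal{O}(dn)$ operations in $\mathbb{F}_q$, which is dominated by the multiplication counts below and can be ignored (we only count multiplications, as stated).

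For the term $u_{j-1}G_1$: the matrix $X$ has exactly $n-k$ nonzero entries, namely the fixed constants $X_{ii}=\alpha^i$ for $i\in[n-k]$, so $u_{j-1}G_1$ has $i$-th coordinate $(u_{j-1})_i\,\alpha^i$ for $i\in[n-k]$ and vanishes in the remaining coordinates. Treating the powers $\alpha^i$ as constants of the code (computed once, outside the per-step count), this term costs $n-k$ multiplications in $\mathbb{F}_{q^d}$, hence $\mathcal{O}\big(d\,(n-k)\,\mathcal{M}(q)\big)$ operations in $\mathbb{F}_q$.

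For the term $u_j G_0$: since $G_0$ is a Cauchy matrix over $\mathbb{F}_q$, I will invoke the classical fast algorithm for multiplying a Cauchy matrix by a vector, which runs in $\mathcal{O}(n\log^2 n)$ field operations — it builds the subproduct tree of the node polynomials $\prod_j(x-\alpha_j)$, forms the numerator polynomial as a tree of combinations of these partial products scaled by the entries of $u_j$, performs fast multipoint evaluation at the $\beta_i$, and finishes with $n$ scalar divisions. The point that needs checking is the arithmetic bookkeeping over the extension: the Cauchy nodes $\alpha_1,\dots,\alpha_n,\beta_1,\dots,\beta_k$ all lie in $\mathbb{F}_q$, so every node polynomial, every subproduct, and every divisor arising in the multipoint evaluation has coefficients in $\mathbb{F}_q$; only the numerator-side polynomials carry coefficients in $\mathbb{F}_{q^d}$. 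Consequently every multiplication executed by the algorithm is of an element of $\mathbb{F}_{q^d}$ by an element of $\mathbb{F}_q$, which is exactly $d$ multiplications in $\mathbb{F}_q$, and so $u_j G_0$ is computed in $\mathcal{O}\big(d\,n\log^2 n\,\mathcal{M}(q)\big)$ operations.

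Adding the two bounds gives $\mathcal{O}\big(d\,[\,n\log^2 n + (n-k)\,]\,\mathcal{M}(q)\big)$ for step $j$, and since $n-k\le n\le n\log^2 n$ this simplifies to $\mathcal{O}\big(d\,n\log^2 n\,\mathcal{M}(q)\big)$, as claimed. The main obstacle is the analysis of $u_j G_0$: one must (i) quote or re-derive that the fast Cauchy matrix--vector routine indeed uses $\mathcal{O}(n\log^2 n)$ operations (the product-/remainder-tree accounting), and (ii) verify that passing from $\mathbb{F}_q$ to $\mathbb{F}_{q^d}$ inflates the cost only by the uniform factor $d$, which rests precisely on the fact that $G_0$ has all its entries in the base field $\mathbb{F}_q$, so that no full $\mathbb{F}_{q^d}\times\mathbb{F}_{q^d}$ multiplication is ever needed in that step; the term $u_{j-1}G_1$ is then routine given the sparsity of $X$.
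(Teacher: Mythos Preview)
Your proof is correct and follows essentially the same approach as the paper: bound the Cauchy product $u_jG_0$ via the fast $\mathcal{O}(n\log^2 n)$ algorithm and bound $u_{j-1}G_1$ via the $(n-k)$-sparsity of $X$, with the factor $d$ coming from the base/extension field transition. The only cosmetic difference is in how the factor $d$ is justified for the Cauchy step: the paper treats $u_j\in\mathbb{F}_{q^d}^k$ as a matrix in $\mathbb{F}_q^{d\times k}$ and runs the $\mathbb{F}_q$-algorithm $d$ times, whereas you inspect the algorithm's internals to argue every multiplication is of type $\mathbb{F}_{q^d}\times\mathbb{F}_q$; both arguments exploit that $G_0$ has entries in $\mathbb{F}_q$ and yield the same bound.
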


\begin{proof}

For simplicity, we only consider the complexity of encoding if $G_0$ is a Cauchy matrix.
The complexity of multiplying a vector $u \in \mathbb{F}^k_{q^d} $ by a Cauchy matrix with entries in $\mathbb{F}_q^k$ is $\mathcal{O}(d n \log^2 n \mathcal{M}(q) )$ 
where the factor of $\mathcal{O}(d)$ comes from the fact that we can treat a vector in $\mathbb{F}_{q^d}^k$ like a matrix in $\mathbb{F}_q^{d \times k}$ and repeat the algorithm $d$ times (once for each row) and the factor $\mathcal{O}( n \log^2 n \mathcal{M}(q) )$ comes from the famous result in \cite{doi:10.1137/0908017}.

One can directly count that there are at most 
$\mathcal{O}((n-k)d\mathcal{M}(q))$
operations corresponding to the multiplication $yG_1$ using the fact from \cite{chudchud88} that 
$\mathcal{M}(q^d) = \mathcal{O}(d  \mathcal{M}(q))$  
and counting that there are at most $n-k$ operations $y_{i}\alpha^j$ and each such multiplication is in $\mathbb{F}_{q^d}^k$. 
\end{proof}

The constructions in \cite{chen2023lower} and \cite{luo2023construction} use structured matrices $G_0$, $G_1$ where all entries are from $\mathbb F_{q^{\delta}}$ with $q=O(n)$, requiring $O( n \log^2 n )$ multiplications in $\mathbb F_{q^\delta}$ in each encoding step. Considering  $\mathcal{M}(q^\delta) = c_q \delta \mathcal{M}(q)$ (see Section 18.5 of 
\cite{bürgisser1996algebraic}, the original result is due to \cite{chudchud88}), we achieve a smaller encoding complexity than in previous papers if $d\approx\frac{\delta ^ 2}{4} $ is as in our construction and $c_q>\frac{\delta}{4}$.
However, we have discovered via computer search that our construction is MDP with very high probability for $d = \delta $ for small $n,k$; see Table I.
If this can be proven for general $n,k$,
we would achieve a better complexity than \cite{chen2023lower} and \cite{luo2023construction} for all values of $c_q$ (since $c_q \geq 2$; see \cite{bürgisser1996algebraic}).  
Moreover, note that in our construction the complexity of multiplying with $G_1$ is negligible.
We plan to explore this and the decoding complexity as part of future work.

\begin{table}[t]
    \centering
    \begin{tabular}{c|cccccc}
        & $n = 7$ & $n = 8$ & $n = 9$ & $n = 10$ & $n = 11$ & $n = 12$ \\ \hline
        $\delta = 3$ & $1.000$ & $1.000$ & $1.000$ & $1.000$ & $1.000$ & $1.000$\\
        $\delta = 4$ & & & $0.995$ & $0.978$ & $0.976$ & $0.973$ \\
        $\delta = 5$ & & & & & $0.995$ & $0.991$ \\
    \end{tabular}
    \caption{The proportion of constructions as in Theorem 5 that are MDP when $\alpha$ has a minimal polynomial of degree $\delta$ over $\mathbb{F}_q$, where $q \approx n + k$. The test is performed by choosing 1000 random samples.}
    \label{tab:random_entries}
\end{table}

\newpage
\bibliographystyle{unsrt}
\bibliography{itw}

\end{document}